\newtheorem{theorem}{Theorem}[section]
\newtheorem{lemma}[theorem]{Lemma}
\newtheorem{corollary}[theorem]{Corollary}
\newenvironment{proof}{\noindent {\bf Proof.}\ }{\qed\par\vskip 4mm\par}
\newcommand{\qed}{\hfill $\square$}
\newcommandx*{\LDAUOmicron}[2][1=@pkling_false]{\mathcal{O}\ifthenelse{\equal{#1}{small}}{\bigl(#2\bigr)}{\left(#2\right)}}
\newcommandx*{\LDAUomicron}[2][1=@pkling_false]{\mathrm{o}\ifthenelse{\equal{#1}{small}}{\bigl(#2\bigr)}{\left(#2\right)}}
\newcommandx*{\LDAUOmega}[2][1=@pkling_false]{\Omega\ifthenelse{\equal{#1}{small}}{\bigl(#2\bigr)}{\left(#2\right)}}
\newcommandx*{\LDAUomega}[2][1=@pkling_false]{\omega\ifthenelse{\equal{#1}{small}}{\bigl(#2\bigr)}{\left(#2\right)}}
\newcommandx*{\LDAUTheta}[2][1=@pkling_false]{\Theta\ifthenelse{\equal{#1}{small}}{\bigl(#2\bigr)}{\left(#2\right)}}
\newcommandx*{\set}[2][2=@pkling_false]{\left\{#1\ifthenelse{\equal{#2}{@pkling_false}}{}{\;\middle|\;#2}\right\}}
\DeclareMathOperator{\diam}{diam}
\DeclareMathOperator{\PoA}{PoA}
\DeclareMathOperator{\PoS}{PoS}
\DeclareMathOperator*{\argmin}{arg\,min}
\title{Quality of Service in Network Creation Games\thanks{This work was partially supported by the German Research Foundation (DFG) within the Collaborative Research Centre ``On-The-Fly Computing'' (SFB 901) and by the EU within FET project MULTIPLEX under contract no.\ 317532.}%
~~\thanks{An extended abstract of this paper has been accepted for publication in the proceedings of the 10th International Conference on Web and Internet Economics (WINE), available at www.springerlink.com \cite{wine2014qosncg}.}}
\author{
    Andreas Cord-Landwehr \and
    Alexander Mäcker \and\newline
    Friedhelm Meyer auf der Heide
}
\date{Heinz Nixdorf Institute \& Department of Computer Science\linebreak University of Paderborn, Germany}
\begin{document}

\maketitle

\begin{abstract}
Network creation games model the creation and usage costs of networks formed by $n$ selfish nodes.
Each node $v$ can buy a set of edges, each for a fixed price $\alpha > 0$.
Its goal is to minimize its private costs, i.e., the sum (SUM-game, Fabrikant et al., PODC 2003) or maximum (MAX-game, Demaine et al., PODC 2007) of distances from $v$ to all other nodes plus the prices of the bought edges.
The above papers show the existence of Nash equilibria as well as upper and lower bounds for the prices of anarchy and stability.
In several subsequent papers, these bounds were improved for a wide range of prices $\alpha$.
In this paper, we extend these models by incorporating quality-of-service aspects:
Each edge cannot only be bought at a fixed quality (edge length one) for a fixed price $\alpha$.
Instead, we assume that quality levels (i.e., edge lengths) are varying in a fixed interval $[\check\beta,\hat\beta]$, $0 < \check\beta \leq \hat\beta$.
A node now cannot only choose which edge to buy, but can also choose its quality $x$, for the price $p(x)$, for a given price function $p$.
For both games and all price functions, we show that Nash equilibria exist and that the price of stability is either constant or depends only on the interval size of available edge lengths.
Our main results are bounds for the price of anarchy.
In case of the SUM-game, we show that they are tight if price functions decrease sufficiently fast.
\end{abstract}

\section{Introduction}
Network creation games (NCG) aim to model the evolution and outcome of networks created by selfish nodes.
In these games, nodes can decide individually which edges they want to buy in order to minimize their private costs, i.e., the costs of the bought edges plus costs for communicating with other nodes.
Each node $v$ can buy a set of edges, each for a price $\alpha >0$.
Its goal is to minimize its private costs, i.e.,  the sum (SUM-game) or maximum (MAX-game) of the distances from $v$ to all other nodes  in the network plus the costs of the bought edges.
Since all decisions are taken individually and only with respect to optimize their private costs, analyzing the resulting network by comparing it to an overall good structure constitutes the central aspect in the study of NCGs.
This task was formalized as analyzing the \emph{price of anarchy} and was first discussed by \textcite{fabrikant2003} for the SUM-game and by \textcite{demaine2007} for the MAX-game.
These papers inspired a series of subsequent works.

In this paper, we incorporate a kind of quality-of-service into the classical network creation games model:
An edge can be bought for different prices, with different latencies.
This is a well-established method, e.g., for internet service providers who offer different bandwidths of connection for different prices.
Considering the recent Netflix deal \cite{netflixdeal}, where the video streaming provider Netflix is paying an internet service provider to provide good connections to its customers, similar effects start to appear on the level of internet service providers.
We formalize this game on individual connection qualities and prove the existence of equilibria and present bounds for the prices of stability and anarchy.

\subsection{Model \& Notations}
An instance of our NCG is given by a set $V$ of $n$ nodes and a price function $p:[\check\beta,\hat\beta]\rightarrow\mathbb{R}^+$ on an interval of possible edge weights $[\check\beta,\hat\beta] \subseteq \mathbb{R}^+$.
A price function is assumed to be monotonically decreasing and the interval to fulfill $0 < \check\beta \leq \hat\beta$.
Each $v\in V$ aims to minimize its private costs by selfishly selecting a \emph{strategy} $s_v\subset V \times [\check\beta,\hat\beta]$ such that each $(u,x) \in s_v$ represents an undirected weighted edge $(\{v,u\}, x)$ from $v$ to $u$ of weight $x$, which is created by $v$ and has price $p(x)$.
For a strategy profile $S = (s_1,\ldots,s_n)$, the resulting weighted graph $G[S]$ consists of vertices $V$ and the weighted edges $\bigcup_{v\in V}\{(\{v,u\}, x)|(u,x)\in s_v\}$.

The costs of a node in the SUM-game are given by $c_v(S)=\sum_{(u,x)\in s_v} p(x) + \sum_{u\in V} d_{G[S]}(v,u)$.
Here, $d_{G[S]}(v,u)$ denotes the shortest weighted path distance from $v$ to $u$ in the weighted graph $G[S]$.
For the MAX-game, the private cost function is $c_v(S)=\sum_{(u,x)\in s_v} p(x) + \max_{u\in V} d_{G[S]}(v,u)$.

The social costs in both games are $c(S) = \sum_{v\in V}c_v(S)$.
We refer to the edge cost term of the cost function as \emph{edge costs} and to the distance term as \emph{distance costs}.
A strategy profile $S=(s_1,\ldots,s_n)$ is called a \emph{Nash equilibrium} (NE) if for every node $i$ and every strategy $s_i'$ it holds:
Let $s'_i\not=s_i$ be a strategy change, then $S'\coloneqq(s_1,\ldots,s_{i-1},s'_i,s_{i+1},\ldots,s_n)$ does not have lower costs for $i$, i.e, $c_i(S)\leq c_i(S')$.
Depending on the game, we call such an equilibrium a SUM-NE or a MAX-NE.
If a strategy profile is not a NE, then there exists at least one node that can perform an \emph{improving response} (IR), i.e., it can decrease its costs by changing its strategy.
An improving response is called a \emph{best response} (BR) if this strategy change is optimal regarding the maximum private costs decrease.

A main objective of the research on NCGs is the analysis of the \emph{price of anarchy}.
This measure gives the quality of Nash equilibria by comparing their social costs to the smallest social cost possible for the given price function.
The price of anarchy, introduced in \cite{koustoupias1999}, is defined as the ratio of a largest social cost  of any Nash equilibrium and the optimal social cost.
The minimal loss by selfish behavior is given by the \emph{price of stability}, see for example \cite{anshelevich2003,anshelevich2004}, and it is defined as the ratio of the smallest social cost of any Nash equilibrium and the optimal social cost.

\subsection{Related Work}
In the original SUM- and MAX-games by \textcite{fabrikant2003} and \textcite{demaine2007}, nodes can only buy edges of fixed length one for a fixed price $\alpha>0$.
In \cite{fabrikant2003}, the authors introduced the SUM-game and proved (among other things) an upper bound of $\LDAUOmicron{\sqrt{\alpha}}$ on the price of anarchy (PoA) in the case of $\alpha<n^2$, and a constant PoA otherwise.
Later, \textcite{albers2006} proved a constant PoA for $\alpha=\LDAUOmicron{\sqrt{n}}$ and the first sublinear worst case bound of $\LDAUOmicron{n^{1/3}}$ for general $\alpha$.
\textcite{demaine2007} were the first to prove an $\LDAUOmicron{n^{\varepsilon}}$ bound for $\alpha$ in the range of $\LDAUOmega{n}$ and $\LDAUomicron{n\lg n}$.
Recently, by \textcite{mihalak2010} and improved by \cite{mihalak2013treeEquilibria}, it was shown that for $\alpha \geq 65 n$ all equilibria are trees (and thus the PoA is constant).
For non-integral constant values of $\alpha > 2$, \textcite{hamilton2013anarchyIsFree} showed that the PoA tends to $1$ as $n\rightarrow\infty$.

For the MAX-game, \textcite{demaine2007} showed that the PoA is at most $2$ for $\alpha\geq n$, for $\alpha$ in range $2\sqrt{\lg n}\leq \alpha\leq n$ it is $\LDAUOmicron{\min\{4^{\sqrt{\lg n}},(n/\alpha)^{1/3}\}}$, and $\LDAUOmicron{n^{2/\alpha}}$ for $\alpha < 2\sqrt{\lg n}$.
For $\alpha>129$, \textcite{mihalak2010} showed, like in the SUM version, that all equilibria are trees and the PoA is constant.

In \textcite{alon2010}, a simpler model, the \emph{basic network creation game} (BNCG), was introduced.
Here, the operation of a node consists of swapping some of its incident edges, i.e., redirecting them to other nodes.
There are no costs associated with such operations.
Restricting the initial network to trees, the only equilibrium in the SUM-game is a star graph.
Without restrictions, all (swap) equilibria are proven to have a diameter of $2^{\LDAUOmicron{\sqrt{\log n}}}$, which is also the PoA.
For the MAX-version, the authors provide an equilibrium network with diameter $\LDAUTheta{\sqrt{n}}$.
In \textcite{lenzner2012greedy}, it is shown for the SUM-game that this model is fundamentally different to the original network creation game in the following sense:
There are equilibria for the BNCG that are not equilibria for NCG for any $\alpha$, and vice versa.

An interesting extension of the SUM-game was introduced and investigated by \textcite{albers2006}.
They  weight each pair $(u,v)$ of nodes, indicating the importance of the connection to $v$ for $u$. The special case of 0-1-weights, defining a friendship graph between the nodes, was examined by \textcite{halevi2007} and \textcite{cola2012ibncg}.

\subsection{Our Results}
We show that equilibria exist for both games, for every monotonically decreasing price function $p:[\check\beta,\hat\beta]\rightarrow\mathbb{R}^+$.
For the SUM-game with $n$ nodes, the price of stability is at most $\mathcal{O}(1 + \hat\beta/\check\beta)$.
For the MAX-game, it is always constant.

For the price of anarchy for the SUM-game, the upper bound is $\LDAUOmicron{\min\{n,(p(x^*) + x^*)/\check\beta\}}$, with $x^*\in[\check\beta,\hat\beta]$ being the edge quality that minimizes $p(x) + x$.
This value can be understood to be the edge weight with optimal price-weight trade-off for an edge, if used for exactly one shortest path.
For example, for $p:[\check\beta,\hat\beta]\rightarrow\mathbb{R}^+$ with $p(x)=\alpha/x$ and $\alpha > 0$, the price of anarchy is $\LDAUOmicron{\sqrt{\alpha}/\check\beta}$.

If  $x^*=\hat\beta$, $p(\hat\beta)\leq\check\beta$, and $p(\check\beta)\leq\hat\beta$ hold, then this upper bound is tight up to constant factors.
Examples for such price functions are linear functions $p:[1,\alpha-2\varepsilon]\rightarrow\mathbb{R}^+$ with $p(x)=\alpha - (1+\varepsilon)x$, for $\alpha > 0$ and  $0 < \varepsilon < \nicefrac{1}{2}$.
For these functions the price of anarchy is $\LDAUTheta{\alpha-\varepsilon}$.

For the MAX-game we provide a price of anarchy upper bound of $\LDAUOmicron{1 + \sqrt[3]{n}}$.

\section{The Sum-Game}
In this section, we consider the quality of equilibria in the SUM-game.

\begin{lemma}
\label{lemma:sumSocialCostBound}
Let $S$ be a strategy profile such that $G[S]$ is connected and no edge can be removed without increasing the social cost.
Denote by $\check x$ the minimal weight of any edge in $G[S]$ and by $m$ the number of all edges.
Then, for $x^*\in[\check\beta,\hat\beta]$ being the value minimizing $p(x) + x$, it holds
$c(S) \geq 2\check x n(n-1) + m(p(x^*) + x^* - 4\check x)$.
\end{lemma}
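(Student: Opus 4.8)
The plan is to split the social cost into its two natural components and bound each one, using the edge-removal hypothesis only to control the weights of edges that carry shortest paths. Since each edge in $G[S]$ is paid for once and $d_{G[S]}(v,v)=0$, the social cost decomposes as $c(S)=\sum_e p(x_e)+\sum_{v\neq u} d_{G[S]}(v,u)$, where the first sum runs over the $m$ edges (with $x_e$ the weight of edge $e$) and the second over all ordered pairs of distinct nodes. Trivially bounding the distance term by $\check x$ per pair is too weak, so the heart of the argument must recover the contribution of the edge weights themselves.

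The crucial step, which I expect to be the main obstacle, is to show that the removal hypothesis forces every edge to realize the distance between its endpoints: for each edge $e=\{v,u\}$ of weight $x_e$ we have $d_{G[S]}(v,u)=x_e$. The inequality $d_{G[S]}(v,u)\leq x_e$ is immediate from the direct edge. If it were strict, there would be a path from $v$ to $u$ of length $<x_e$ avoiding $e$; then any shortest path traversing $e$ could be rerouted along this cheaper detour, yielding a strictly shorter walk and contradicting minimality. Hence $e$ would lie on no shortest path at all, so deleting it would leave all distances unchanged while saving the strictly positive amount $p(x_e)$ in edge cost, contradicting that no edge is removable without increasing the social cost. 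Therefore $d_{G[S]}(v,u)=x_e$ for every edge, and the delicate part is precisely this shortcut argument.

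With this in hand I would bound the distance term by separating neighbor pairs from the rest. The $2m$ ordered neighbor pairs contribute exactly $2\sum_e x_e$ by the previous step, while each of the remaining $n(n-1)-2m$ ordered pairs lies at distance at least $2\check x$, since a non-adjacent pair is joined only by paths using at least two edges, each of weight at least $\check x$. This yields $\sum_{v\neq u} d_{G[S]}(v,u)\geq 2\sum_e x_e + 2\check x\,(n(n-1)-2m)$.

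Finally I would recombine the components. Adding the edge term gives $c(S)\geq \sum_e\bigl(p(x_e)+2x_e\bigr) + 2\check x\,n(n-1) - 4\check x\,m$, and since $x_e>0$ and $x^*$ minimizes $p(x)+x$, each summand satisfies $p(x_e)+2x_e\geq p(x_e)+x_e\geq p(x^*)+x^*$. Summing over the $m$ edges replaces $\sum_e(p(x_e)+2x_e)$ by $m\bigl(p(x^*)+x^*\bigr)$, and collecting terms produces the claimed bound $c(S)\geq 2\check x\,n(n-1)+m\bigl(p(x^*)+x^*-4\check x\bigr)$. Once the equality $d_{G[S]}(v,u)=x_e$ is secured, the remaining estimates are routine.
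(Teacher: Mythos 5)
Your proof is correct and follows essentially the same route as the paper: decompose the social cost, charge $2x_e$ of distance to each edge's two ordered endpoint pairs and $2\check x$ to each non-adjacent ordered pair, then use the minimality of $p(x^*)+x^*$ to get the stated bound. The only difference is that you make explicit the use of the edge-removal hypothesis to justify $d_{G[S]}(u,v)=x_e$ for adjacent pairs, a step the paper's proof leaves implicit.
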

\begin{proof}
Let $E_x  \coloneqq  \bigcup_{v \in V} \{\{v,u\} | (u,x) \in s_v \}$ denote the edges in $G[S]$ of weight $x$.
Using that not directly connected nodes have (weighted) distance at least $2\check x$:
\begin{align*}
c(S) & \geq \sum_{x \in[\check\beta,\hat\beta]:|E_x|\not=0} (p(x) + 2 x)|E_x| + 2\check x \left(n(n-1)-2 \sum_{x \in[\check\beta,\hat\beta]:|E_x|\not=0}|E_x|\right) \\
    & \geq 2\check x n(n-1) + \sum_{x \in[\check\beta,\hat\beta]:|E_x|\not=0} |E_x|(p(x) + x - 4\check x)
     \geq 2\check x n(n-1) + m(p(x^*) + x^* - 4\check x)
\end{align*}
\end{proof}

\begin{lemma}
\label{lemma:sumOptimalSolutions}
Let $p:[\check\beta,\hat\beta]\rightarrow\mathbb{R}^+$ be a price function.
Define $\chi^*\in [\check\beta,\hat\beta]$ to be the value minimizing $p(x) + 2x$ and $\bar\chi \in [\check\beta,\hat\beta]$ the value minimizing $p(x) + 2(n-1)x$.
Then, the optimal social cost is given by a star with all edges having weight $\bar\chi$ or by a complete graph with all edges having weight $\chi^*$.
\end{lemma}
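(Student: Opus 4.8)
The plan is to charge the social cost edge-by-edge, reduce the search to two extreme topologies whose costs are exactly the two quantities in the statement, and finally rule out every intermediate topology.

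First I would rewrite the social cost. Since the distance cost satisfies $\sum_{v}\sum_{u} d_{G[S]}(v,u)=\sum_{e}k_e\,x_e$, where the sum runs over the edges $e$ of $G[S]$, $x_e$ is the weight of $e$, and $k_e$ is the number of \emph{ordered} pairs $(v,u)$ whose (fixed) shortest path traverses $e$, the social cost becomes
\[
c(S)=\sum_{e}\bigl(p(x_e)+k_e\,x_e\bigr).
\]
For a fixed topology and fixed shortest-path structure each edge weight can be optimized independently, so an edge carrying load $k_e$ contributes at least $f(k_e)$, where $f(k):=\min_{x\in[\check\beta,\hat\beta]}(p(x)+kx)$. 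As a pointwise minimum of functions that are affine and increasing in $k$ (each with slope $x\ge\check\beta>0$), $f$ is concave and strictly increasing, and its minimizers are exactly $\chi^*$ (at $k=2$) and $\bar\chi$ (at $k=2(n-1)$).

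Next I would identify the two candidate optima. In the complete graph with a common weight, every edge is the unique shortest path between its endpoints in both directions, so $k_e=2$ and the cost is $\binom{n}{2}f(2)=\binom{n}{2}(p(\chi^*)+2\chi^*)$, attained with weight $\chi^*$. For trees I would use that an edge $e$ of a spanning tree separates the vertices into parts of sizes $s_e$ and $n-s_e$, so that $k_e=2s_e(n-s_e)\ge 2(n-1)$, with equality iff $e$ is incident to a leaf. The star attains this minimum on \emph{every} edge simultaneously, and since $f$ is increasing this makes the star the cheapest tree, of cost $(n-1)f(2(n-1))=(n-1)(p(\bar\chi)+2(n-1)\bar\chi)$, attained with weight $\bar\chi$. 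Thus both quantities in the statement are realized, and it remains to show that no other graph is strictly cheaper than both.

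The final and hardest step is to prove that a connected graph with $n-1<m<\binom{n}{2}$ edges never undercuts both extremes; the natural route is to view the optimal cost as a function of the edge number $m$ and show it is minimized at the endpoints $m=n-1$ and $m=\binom{n}{2}$. Here the main obstacle is genuinely new compared to the classical unit-length model: there $f$ is affine, so only the total load $\sum_e k_e$ matters and a one-line convexity-in-$m$ bound suffices, whereas in our setting $f$ is strictly concave, so the \emph{distribution} of the per-edge loads $k_e$ matters and the naive relaxation (fixing only $m$ and $\sum_e k_e$) is too weak, since it admits load vectors that no sparse graph can realize. I would therefore treat the cyclic case separately, either by reducing to equal-weight graphs and invoking the classical extremal characterization for each fixed common weight, or by an exchange argument that, starting from a hypothetical intermediate optimum, either deletes an edge on a cycle or inserts a missing edge without increasing the cost, driving the configuration to a tree or to the complete graph. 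Controlling how the loads $k_e$ and the pairwise distances change under such an insertion or deletion is the technical heart of the proof.
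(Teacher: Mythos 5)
Your setup (writing $c(S)=\sum_e(p(x_e)+k_ex_e)$, lower-bounding each edge by $f(k_e)=\min_x(p(x)+kx)$, and computing the two candidate costs $\binom{n}{2}f(2)$ and $(n-1)f(2(n-1))$ via the load bounds $k_e=2$ for clique edges and $k_e=2s_e(n-s_e)\geq 2(n-1)$ for tree edges) is sound as far as it goes, and it cleanly establishes that the star is the best tree and that the stated weights $\chi^*$ and $\bar\chi$ are the right ones for the two candidates. But the proof has a genuine gap exactly where you flag it: you never actually rule out graphs with $n-1<m<\binom{n}{2}$ edges. You name two possible strategies (reduction to equal weights plus a classical extremal result, or an edge insertion/deletion exchange argument) but execute neither, and you yourself observe that the concavity of $f$ defeats the naive relaxation over load vectors. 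Since excluding the intermediate topologies \emph{is} the content of the lemma beyond the two easy computations, the argument as written does not prove the statement.

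The paper closes this gap by a different route that sidesteps the concavity problem entirely. It first shows that any graph $G$ can be converted into a graph $G'$ of hop diameter at most two, consisting of a center $c$ joined to all satellites plus possibly some satellite--satellite edges, with $c(G')\leq c(G)$; this is done by reusing the $n-1$ lightest edges of $G$ as the center's edges and exhibiting a distance-nonincreasing bijection between node pairs of $G$ and of $G'$. In such a graph every satellite--satellite edge serves exactly one (unordered) pair, so after optimizing weights the social cost is an \emph{affine} function of the number $m$ of satellite--satellite edges; an affine function on $[0,\binom{n-1}{2}]$ is minimized at an endpoint, giving the star ($m=0$) or the complete graph ($m=\binom{n-1}{2}$) with no intermediate case to analyze. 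If you want to complete your own approach, you would need to supply the exchange argument controlling how the loads $k_e$ change under edge insertion or deletion, or else adopt a normalization step like the paper's that collapses all intermediate topologies onto a one-parameter family where the cost is affine.
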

\begin{proof}
First, we argue that it is enough to only consider graphs with hop diameter of $\leq 2$ by converting any graph $G$ to a graph $G'$ with not higher social cost but hop diameter one or two.
For this, let $x_1\leq \ldots\leq x_m$ be the weights of all edges of $G$ (same weights are listed multiple times), ordered by increasing weight.
We create $G'$ with one center node $c$ and $n-1$ satellite nodes $v_1,\ldots,v_{n-1}$ and an initial edge set $E_c$ by connecting each $v_i$ to $c$ by an edge of weight $x_i$.

In the following, we create a one-to-one mapping from node pairs in $G$ to node pairs in $G'$ such that no distance is increased.
First, for every directly connected node pair $(u,v)$ such that its edge is associated to an edge in $E_c$, we map $u$ and $v$ to the end points of this edge.
Secondly, for every node pair $(u,v)$ that is not directly connected, but where both the first and last edge of a shortest path in $G$ are associated to edges in $E_c$, we map the nodes to the satellite nodes adjacent to the corresponding edges in $G'$.
Thirdly, for every node pair $(u,v)$ that is not directly connected in $G$, but where the first and not the last edge of a shortest path in $G$ is associated to an edge in $E_c$, we map to the satellite node adjacent to the corresponding edge and to an arbitrary other node to that not more than $n-2$ nodes are mapped (including the mappings done by edge associations during the current step).
Fourthly, for all remaining not directly connected node pairs in $G$, we map them to arbitrary pairs of nodes to which no mapping is performed yet.
Finally, for all directly connected edges in $G$, which are not yet mapped, map them to an arbitrary pair of nodes.
Here, if the distance of the mapped pair in $G'$ is at most the distance as in $G$ then do nothing.
Otherwise, create an edge between them of same weight as in $G$.

By construction, this mapping is bijective as all node pairs are mapped and to every node pair in $G'$ only one pair is mapped.
Further, no distance of a mapped pair of nodes is bigger than the corresponding distance in $G$.
This is obvious for the first and last step.
For the other steps, this holds by the fact that $x_1,\ldots,x_{n-1}$ are the minimal weights of all edges in $G$.
Finally, the edge costs in $G'$ are at most the edge costs in $G$ and hence $c(G')\leq c(G)$.

We claim that the so constructed graph either is a star or a complete graph.
By construction, a shortest distance path between any two not directly connected nodes $u,v\in V\setminus\{c\}$ must contain $c$.
Hence, any edge connecting two satellite nodes $u,v\in V\setminus\{c\}$ is used exclusively for the shortest paths $u$ to $v$ and $v$ to $u$ and thus has weight $\bar\chi$.
Using the social cost optimality, all satellite nodes must have the same degree and be connected by edges with weight of $x$ to $c$.
For $m$ being the total number of edges that connect any two satellites, the social cost is $m(p(\bar\chi) + 2\bar\chi - 4x) + (n-1)(p(x) + 2(n-1)x)$.
We see that for any fixed $x$ this term is minimized either with $m=0$ (lower bound for $m$) or $m=(n-1)n/2 - (n-1)$ (upper bound for $m$).
Hence, the optimal solution is either a star or a complete graph.
For a complete graph, all weights are $\chi^*$ with  $\chi^*$ minimizing the social cost given by $n(n-1)(x + p(x)/2)$.
Otherwise, for a star the edge weight $\bar\chi$ minimizes the social cost given by
$$
    2(n-1)x + (n-1)(n-2)2x + (n-1)p(x) = (n-1)(2(n-1)x + p(x))
$$
\end{proof}

Given a price function $p$, the \emph{price of stability} is the minimum loss of efficiency by the selfish acting of the nodes, i.e., the best case social cost ratio of an equilibrium network and an optimal solution.

\begin{theorem}
\label{thm:sumEqExistence}
For every price function $p:[\check\beta,\hat\beta]\rightarrow\mathbb{R}^+$ a SUM-NE exists.
Let $x^*\in [\check\beta,\hat\beta]$ be the value minimizing $p(x) + x$ and $\bar x\in [\check\beta,\hat\beta]$ the value minimizing $p(x) + (n-1)x$, then either a star with all edges having weight $\bar x$ or a complete graph with all edges having weight $x^*$ forms a SUM-NE graph.
\end{theorem}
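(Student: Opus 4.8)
The plan is to exhibit, for every price function, one of the two named graphs as a SUM-NE, choosing which one by a dichotomy on the sign of $p(x^*)-x^*$. The first ingredient is a monotonicity relation between the two optimal weights: since $p$ is decreasing and the coefficient of the increasing term $x$ is larger in $p(x)+(n-1)x$ than in $p(x)+x$, I expect $\bar x\le x^*$. I would prove this by the standard exchange argument: adding the two optimality inequalities $p(\bar x)+(n-1)\bar x\le p(x^*)+(n-1)x^*$ and $p(x^*)+x^*\le p(\bar x)+\bar x$ and cancelling $p(\bar x),p(x^*)$ yields $(n-2)(\bar x-x^*)\le 0$, hence $\bar x\le x^*$ for $n\ge 3$; the cases $n\le 2$ are trivial, as then star and complete graph coincide and $\bar x=x^*$.

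\emph{Case A} ($p(x^*)\le x^*$): I would claim that $K_n$ with every edge of weight $x^*$, under an arbitrary assignment of each edge to one of its endpoints, is a SUM-NE. Here every node reaches every other directly at distance $x^*$, and for a node $v$ I would check the primitive deviations. Rescaling an owned edge $\{v,u\}$ to weight $x$ affects only the shortest path to $u$ (any detour through $u$ is longer), so the trade-off is $p(x)+x$, minimized at $x^*$: no gain. Deleting an owned edge $\{v,u\}$ raises $d_{G[S]}(v,u)$ from $x^*$ to $2x^*$ (a two-hop path through any third node survives), changing $v$'s cost by $x^*-p(x^*)\ge 0$: no gain. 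Adding edges is useless since $v$ is already adjacent to all nodes. As each owned edge enters the edge cost additively and controls exactly the single path it can shorten, a compound deviation decomposes into these non-improving moves, so $K_n$ is a NE.

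\emph{Case B} ($p(x^*)>x^*$): now $2\bar x\le 2x^*<x^*+p(x^*)$ by the monotonicity relation, and I would claim the star with center $c$, all edges of weight $\bar x$, each edge owned by its satellite, is a SUM-NE. A satellite $v_i$ pays $p(\bar x)$ and has distance $\bar x$ to $c$ and $2\bar x$ to each of the other $n-2$ satellites, so its edge lies on all $n-1$ of its shortest paths; rescaling it to weight $x$ changes $v_i$'s cost by $[p(x)+(n-1)x]-[p(\bar x)+(n-1)\bar x]\ge 0$ by optimality of $\bar x$. Buying a shortcut of weight $x$ to another satellite shortens only the path to that one satellite, changing the cost by $p(x)+x-2\bar x\ge (p(x^*)+x^*)-2\bar x>0$ by the case inequality, while dropping the center edge and rerouting only lengthens every distance. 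The center owns nothing and already reaches all nodes at $\bar x$; a cheaper shortcut of weight $x$ would change its cost by $p(x)+x-\bar x>0$. These primitive moves again compose non-improvingly, so the star is a NE.

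The main obstacle is the bookkeeping for compound deviations: I must argue that a node changing several edges at once, or rewiring while rerouting, cannot beat the sum of the primitive single-edge moves. The key is that in both graphs each owned edge enters $v$'s edge cost additively and controls exactly the one shortest path whose length it can lower, so no cross-edge synergy exists. Establishing $\bar x\le x^*$ is the one genuinely non-local step, and it is exactly what makes the dichotomy exhaustive: whenever the complete graph fails to be stable, i.e.\ $p(x^*)>x^*$, the same gap forces $x^*+p(x^*)>2\bar x$, which is precisely the inequality the star requires.
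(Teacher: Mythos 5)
Your Case B is sound: when $p(x^*) > x^*$ you also have $p(x^*) > x^* \geq \bar x$, which places you squarely in the regime where the star is stable, and your verification of the star matches the paper's. The problem is Case A. Your claim that in the clique ``no cross-edge synergy exists'' and that compound deviations decompose into primitive single-edge moves is false, and this is exactly where the clique can fail. The dangerous deviation is: a node $v$ owning $k$ edges of weight $x^*$ deletes \emph{all} of them and buys a \emph{single} cheap edge of weight $\bar x$ to one former neighbor. The new edge reroutes every deleted connection at cost $\bar x + x^*$ instead of the $2x^*$ your single-deletion analysis assumes, so the cost change is $k(\bar x - p(x^*)) + p(\bar x) - x^*$, which can be negative even though each primitive move (one deletion, or one addition in isolation) is non-improving. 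Concretely, take $[\check\beta,\hat\beta]=[1,100]$, $p(x)=15.9-1.1x$ on $[1,10]$ and $p(x)=4.9-0.001(x-10)$ on $[10,100]$. Then $x^*=10$, $p(x^*)=4.9\leq x^*$ (your Case A), while $\bar x=1$, $p(\bar x)=14.8$. A node owning $k\geq 2$ clique edges gains $3.9k-4.8>0$ by the deviation above, so the clique is not a Nash equilibrium; the star (weight $\bar x$) is the equilibrium here, since $\bar x=1 < 4.9 = p(x^*)$.

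The correct dichotomy therefore compares $p(x^*)$ with $\bar x$, not with $x^*$: the paper takes the star when $\bar x < p(x^*)$, and otherwise analyzes the clique, where the decisive deviation is precisely the ``drop $n-1$ edges, buy one edge of weight $\bar x$'' move, yielding the condition $p(\bar x) - x^* + (n-1)(\bar x - p(x^*)) \geq 0$. Since this can still fail when $p(\bar x) < x^*$, a third case is needed, and the paper closes it by showing that the inequality making the clique unstable is incompatible with the inequality that would make the star unstable (a short chain of inequalities ending in $0>0$). Your monotonicity step $\bar x \leq x^*$ is correct and useful, but it cannot substitute for this three-way case analysis, and the decomposition-into-primitive-moves argument needs to be abandoned for the clique.
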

\begin{proof}
We distinguish three cases, first depending on whether $p(x^*) > \bar x$ or not and then for $p(x^*) \leq \bar x$, whether $p(\bar x) < x^*$ or not.

For $\bar x < p(x^*)$ we consider a star graph consisting of a center node $c$ and $n-1$ satellite nodes $v_1,\ldots,v_{n-1}$.
Every satellite node $v_i$ owns one edge towards $c$ of weight $\bar x$.
Considering the center node, $c$ will not create any edge since it is directly connected to all other nodes and $\bar x\leq x^*$ (follows by definition), thus every edge weight is less than the optimal weight for connecting $c$ to exactly one node.
But also no satellite node $v_i$ will perform an operation, since on the one hand the weight of $v_i$'s only edge is optimal for being the only connection to $n-1$ nodes.
On the other hand, for the optimal cost $p(x^*)$ to improve the distance to exactly one other satellite node, the gain is $2\bar x - x^* - p(x^*) \leq 0$.
Hence the star forms a SUM-NE.

For $p(x^*) \leq \bar x$, we check when a clique with all edges having weight $x^*$ and arbitrarily assigned edge ownerships forms a SUM-NE graph.
First, we see that decreasing the weight of any edge $\{u,v\}$ exclusively decreases the distances between $u$ and $v$.
Since the edge weight $x^*$ is the optimal weight for an edge used for exactly one shortest path, no node will change any edge weight as well as create a new edge.
Hence, we only have to consider the IR of removing some edges and creating one new edge of weight $x \leq x^*$.
Since all strategy changes are unilateral, after this change the hop-diameter is two.
This is, other nodes are either directly connected or at distance $x + x^*$.
We consider the best response for a node $v$ consisting of (optimally) removing $n-1$ edges of weight $x^*$ and creating one new edge of weight $\bar x$ to an arbitrary node.
This changes the costs of $v$ by
$$
-(n-1)p(x^*) + p(\bar x) + (n-2)\bar x - (x^* - \bar x)
    = p(\bar x) - x^* + (n-1)(\bar x - p(x^*))
    \geq p(\bar x) - x^*
$$
In particular, for $p(\bar x)\geq x^*$ the costs increase.

It remains to consider the last case with $p(x^*)\leq \bar x$, $p(\bar x)< x^*$ and $p(\bar x) - x^* + (n-1)(\bar x - p(x^*)) < 0$.
Again, like in the first case, we claim that a star forms a SUM-NE graph.
By choosing all edge weights to be $\bar x$, we only have to show that creating a new edge gives no gain.
The optimal cost change by creating a new edge is $p(x^*) - 2\bar x + x^*$.
This is an IR if and only if $x^* < 2\bar x - p(x^*)$.
But combining both constrains gives:
\begin{align*}
0 & > p(\bar x) - x^* + (n-1)(\bar x - p(x^*))
    \geq p(\bar x) - (2\bar x - p(x^*)) + (n-1)(\bar x - p(x^*))\\
    & \geq p(\bar x) - 2\bar x + p(x^*) + (n-1)\bar x - (n-1)p(x^*)
    \geq (n-3)\bar x  - (n-3) p(x^*)\\
    & \geq (n-3)(\bar x - \bar x)
    \geq 0
\end{align*}
which is a contradiction and hence the star forms a SUM-NE.
\end{proof}

\begin{corollary}
\label{cor:sumPoS}
Let $x^*\in [\check\beta,\hat\beta]$ be the value minimizing $p(x) + x$ and $\bar x\in [\check\beta,\hat\beta]$ the value minimizing $p(x) + (n-1)x$.
Then, the price of stability in the SUM-game is constant if $p(x^*) > \bar x$ or $x^* > p(\bar x)$ and otherwise $\LDAUOmicron{1 + x^*/\bar x}$.
\end{corollary}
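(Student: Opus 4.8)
The plan is to read off an upper bound on the cheapest Nash equilibrium from the explicit constructions of Theorem~\ref{thm:sumEqExistence} and to divide by the exact optimum from Lemma~\ref{lemma:sumOptimalSolutions}. Write $C_{\mathrm{star}}(x)=(n-1)(2(n-1)x+p(x))$ and $C_{\mathrm{clique}}(x)=n(n-1)(x+p(x)/2)$ for the social costs of a star, respectively a clique, with all edges of weight $x$ (both are computed in the proof of Lemma~\ref{lemma:sumOptimalSolutions}), so that the optimum is $\min\{C_{\mathrm{star}}(\bar\chi),\,C_{\mathrm{clique}}(\chi^*)\}$, where $\bar\chi$ minimizes $p(x)+2(n-1)x$ and $\chi^*$ minimizes $p(x)+2x$. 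Comparing the case split with that of Theorem~\ref{thm:sumEqExistence}, the two constant-price-of-stability conditions $p(x^*)>\bar x$ and $x^*>p(\bar x)$ are exactly the cases in which the star with edge weight $\bar x$ is a Nash equilibrium, while the complementary case $p(x^*)\le\bar x$ and $x^*\le p(\bar x)$ is the one in which the clique with edge weight $x^*$ is a Nash equilibrium. Hence I would bound $C_{\mathrm{star}}(\bar x)/\mathrm{OPT}$ in the first case and $C_{\mathrm{clique}}(x^*)/\mathrm{OPT}$ in the second.

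The structural tool I would set up first is a monotonicity of minimizers: for $a>b\ge 0$, adding the two optimality inequalities for $p(x)+ax$ and $p(x)+bx$ shows that the minimizer of $p(x)+ax$ is at most that of $p(x)+bx$. Applied to the coefficients $1,2,n-1,2(n-1)$ (for $n\ge 3$) this gives the ordering $\bar\chi\le\bar x\le\chi^*\le x^*$, which is what allows each equilibrium to be compared against both optimum candidates with only bounded loss. In the constant case I would prove two inequalities for the equilibrium star. First, $C_{\mathrm{star}}(\bar x)\le 2\,C_{\mathrm{star}}(\bar\chi)$: since $p\ge 0$ one has $p(x)+2(n-1)x\le 2(p(x)+(n-1)x)$, and evaluating at $\bar x$ and using that $\bar x$ minimizes $p(x)+(n-1)x$ supplies the factor $2$. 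Second, $C_{\mathrm{star}}(\bar x)=\LDAUOmicron{1}\cdot C_{\mathrm{clique}}(\chi^*)$: using $\bar x\le\chi^*$ on the distance term and the optimality estimate $p(\bar x)\le p(\chi^*)+(n-1)\chi^*$ on the edge term, a termwise (mediant) comparison against $C_{\mathrm{clique}}(\chi^*)=n(n-1)\chi^*+\tfrac{n(n-1)}{2}p(\chi^*)$ bounds the ratio by a constant. Because the optimum is the smaller of the two candidates, these two bounds together yield a constant price of stability.

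For the remaining case I would instead lower-bound the optimum uniformly by $\mathrm{OPT}\ge (n-1)^2\bar x$. The clique candidate gives $C_{\mathrm{clique}}(\chi^*)\ge n(n-1)\chi^*\ge (n-1)^2\bar x$ by $\chi^*\ge\bar x$, and for the star candidate the chain $p(\bar\chi)+2(n-1)\bar\chi\ge p(\bar x)+(n-1)\bar x\ge (n-1)\bar x$ gives $C_{\mathrm{star}}(\bar\chi)\ge (n-1)^2\bar x$. On the equilibrium side, the case hypothesis $p(x^*)\le\bar x$ turns the cost into $C_{\mathrm{clique}}(x^*)=n(n-1)(x^*+p(x^*)/2)\le n(n-1)(x^*+\bar x/2)$. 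Dividing yields $\mathrm{PoS}\le\tfrac{n}{n-1}\bigl(\tfrac{x^*}{\bar x}+\tfrac12\bigr)=\LDAUOmicron{1+x^*/\bar x}$, as claimed.

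I expect the main obstacle to be precisely these cross-comparisons: since the optimum is the minimum of the star and the clique cost, it is not enough to compare an equilibrium with the structurally matching optimum, and one must also control it against the other candidate. The ordering $\bar\chi\le\bar x\le\chi^*\le x^*$ together with the optimality inequalities for the weight-penalized objectives is exactly what keeps this cross term bounded --- constant in the first case and $1+x^*/\bar x$ in the second --- with the $p$-dependent edge terms in the constant case being the most delicate part of the bookkeeping.
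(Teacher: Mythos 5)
Your proof is correct and follows essentially the same route as the paper: take the explicit equilibria of Theorem~\ref{thm:sumEqExistence}, compare them against the two optimal candidates of Lemma~\ref{lemma:sumOptimalSolutions}, and control the cross-comparisons via the optimality inequalities of the minimizers of $p(x)+cx$. The only real divergence is in the star-equilibrium versus clique-optimum comparison, where the paper bounds the edge-cost term via the Nash stability condition $n(x^*+p(x^*))>p(\bar x)$ while you use the purely structural estimate $p(\bar x)\le p(\chi^*)+(n-1)\chi^*$; both yield a constant.
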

\begin{proof}
We consider the equilibrium graphs as given in Theorem~\ref{thm:sumEqExistence} and the optimal solution graphs as stated in Lemma~\ref{lemma:sumOptimalSolutions}.
First, consider the case that our equilibrium graph forms a star with all edges having weight $\bar x$.
If the optimal solution is also a star with all edges having weight of $\bar\chi$, we get $\PoS \leq 2$.
%
Otherwise, if the optimal solution forms a clique, we get:
\begin{align*}
    \PoS & \leq \frac{(n-1)(2(n-1)\bar x + p(\bar x)))}{n(n-1)(\chi^* + p(\chi^*)/2)}
            \leq \frac{2(n-1)\bar x + p(\bar x)}{n(x^* + p(x^*))/2}\\
        & = \frac{n-1}{n}\frac{4\bar x}{x^* + p(x^*)} + \frac{2p(\bar x)}{n(x^* + p(x^*))} \leq 6
\end{align*}
Here, the first term is at most $4$ with $x^* \geq \bar x$ and the second term at most $2$, since the star forms a SUM-NE and hence $n(x^* + p(x^*)) > p(\bar x)$ must hold.
(Otherwise any node owning an edge can perform an IR.)

Next, consider the case when the SUM-NE graph forms a clique.
If also the optimal solution is a clique, we get $\PoS \leq 2$ (analog to the first case.)
Otherwise, if the optimal solution forms a star, we can estimate the PoS as:
\begin{align*}
    \PoS & \leq \frac{n(n-1)(x^* + p(x^*)/2)}{(n-1)(2(n-1)\bar\chi + p(\bar\chi))}
            \leq \frac{n(x^* + p(x^*)/2)}{(n-1)\bar x + p(\bar x)}\\
        & \leq \frac{n p(x^*)}{(n-1)\bar x + p(\bar x)} + \frac{nx^*}{(n-1)\bar x + p(\bar x)}
            \leq 2 + 2x^*/\bar x
\end{align*}
Here, the first term comes from $p(x^*)\leq \bar x$.
\end{proof}

Similar to \textcite{albers2006}, we start our analysis for the price of anarchy by bounding the social cost of a SUM-NE graph essentially by the diameter of the graph.
Using arguments about the maximum weights and prices in SUM-NE graphs, we can further bound this diameter and get a PoA upper bound only depending on the price function and its domain.

\begin{lemma}
\label{le:neCost}
Let $p:[\check\beta,\hat\beta]\rightarrow\mathbb{R}^+$ be a price function and $S$ a SUM-NE strategy profile.
Then
$$
    c(S) \leq n\delta_{G}(v) + x^*(n-1)^2 + 2 (p(x^*)+x^*)n (n-1)
$$
with $G \coloneqq G[S],\delta_{G}(v) \coloneqq \sum_{u\in V} d_{G}(v,u)$ being the distance costs of an arbitrary node in the equilibrium graph, and $x^*\in[\check\beta,\hat\beta]$ chosen such that it minimizes $p(x) + x$.
\end{lemma}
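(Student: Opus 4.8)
The plan is to split the social cost as $c(S) = D + E$, where $D \coloneqq \sum_{u\in V}\delta_G(u)$ is the total distance cost and $E$ is the total edge cost, and to bound each summand separately by exploiting that in a NE no node has an improving response. For both bounds I fix the reference node $v$ and, for every node $u$, compare $u$'s equilibrium cost against a single carefully chosen unilateral deviation. The trade-off weight $x^*$ is the natural weight to use in these deviations, since an edge serving a single shortest path is cheapest at weight $x^*$.

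For $D$, I would let each $u$ deviate by additionally buying one edge to $v$ of weight $x^*$. In the resulting graph every node $z$ satisfies $d(u,z) \le x^* + d_G(v,z)$, so $u$'s deviation cost is at most $E_u + p(x^*) + (n-1)x^* + \delta_G(v)$, where $E_u$ denotes $u$'s equilibrium edge cost. Since $S$ is a SUM-NE, $u$'s current cost $E_u + \delta_G(u)$ is no larger; the common edge-cost term $E_u$ cancels and leaves $\delta_G(u) \le \delta_G(v) + (n-1)x^* + p(x^*)$. Summing over all $n$ nodes yields $D \le n\,\delta_G(v) + n(n-1)x^* + n\,p(x^*)$.

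For $E$, I would instead let each $u$ deviate by keeping its neighbours but resetting the weight of every edge it owns to $x^*$. The key structural observation is that a simple shortest path leaving $u$ contains at most one edge incident to $u$, namely its first edge; hence reweighting all of $u$'s edges to $x^*$ increases each distance $d_G(u,z)$ by at most $x^*$. Thus $u$'s deviation cost is at most $|s_u|\,p(x^*) + \delta_G(u) + (n-1)x^*$, and the NE property together with $|s_u|\le n-1$ gives $E_u \le (n-1)\bigl(p(x^*)+x^*\bigr)$. Summing over all nodes gives $E \le n(n-1)\bigl(p(x^*)+x^*\bigr)$.

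Adding the two bounds gives $c(S) \le n\,\delta_G(v) + 2n(n-1)x^* + n^2 p(x^*) + n(n-1)p(x^*)$, and a routine comparison (using $n\ge 2$) shows this is at most $n\,\delta_G(v) + x^*(n-1)^2 + 2\bigl(p(x^*)+x^*\bigr)n(n-1)$, since the extra slack $x^*(n-1)^2$ and the doubled coefficient on $2(p(x^*)+x^*)n(n-1)$ absorb the lower-order terms. I expect the edge-cost step to be the main obstacle: the whole argument hinges on the observation that a shortest path out of $u$ uses only a single $u$-incident edge, which is exactly what keeps the distance penalty of the reweighting deviation at $x^*$ per target and thereby prevents $u$'s edge cost from growing with the (possibly $\LDAUTheta{n}$) number of edges it owns.
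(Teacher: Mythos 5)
Your proof is correct, but it takes a genuinely different route from the paper's. The paper fixes a shortest-path tree $T$ rooted at $v$, first shows that no NE edge can cost more than $n(p(x^*)+x^*)$, and then bounds each node's \emph{entire} private cost $c_u(S)$ via a single deviation (drop all non-tree edges, buy one $x^*$-edge to $v$), summing $\sum_u m_u = n-1$ over the tree at the end. You instead decompose $c(S)$ into total distance cost and total edge cost and attack each with its own deviation: the distance half is essentially the paper's ``connect to $v$ and route through it'' idea, but your edge half replaces the tree and the per-edge price bound entirely with the reweighting deviation and the observation that a simple shortest path out of $u$ traverses at most one $u$-owned edge, so the penalty is a single $x^*$ per target rather than one per owned edge. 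This is the crux and it is sound; it even yields the slightly sharper total edge-cost bound $n(n-1)(p(x^*)+x^*)$ versus the $2n(n-1)(p(x^*)+x^*)$ slack the paper's accounting produces. Two small points to tidy up: you should say a word on why $|s_u|\leq n-1$ in a NE (a node owning two parallel edges to the same endpoint could drop the redundant one and save $p>0$), and your final summation has an arithmetic slip --- the two bounds add to $n\delta_G(v) + 2n(n-1)x^* + n^2 p(x^*)$, not to an expression containing both $n^2p(x^*)$ and $n(n-1)p(x^*)$ --- but since $n^2 \leq 2n(n-1)$ for $n\geq 2$ the corrected total is still dominated by the claimed right-hand side, so the conclusion stands.
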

\begin{proof}
First, we show that in an equilibrium graph all edges have a price of at most $n(p(x^*)+x^*)$.
For this, assume there is an edge of price $p(x) > (p(x^*)+x^*)n$.
Then, exchanging this edge with one having weight $x^*$ would decrease the edge costs by $p(x)-p(x^*) > nx^*+(n-1)p(x^*)$ while increasing the distance costs only by at most $(x^*-x) (n-1)$.
Since $(x^*-x)(n-1) < nx^* + (n-1)p(x^*)$, this is an improving response and hence contradicts $S$ forming a SUM-NE.

Next, take an arbitrary node $v$ and consider a shortest path tree $T$ of $v$ in $G$.
For every $u\in V$ define $m_u  \coloneqq  |\{\{u,w\} | (w,x) \in s_u \wedge \{u,w\} \in T\}|$ to be the number of tree edges maintained by node $u$.
Then, for the cost of any node $u \neq v$ it holds
$$
    c_u(S) \leq (p(x^*)+x^*)n(m_u + 1) + \delta_{G}(v) + x^*(n-1)
$$
To see this, we assume the contrary and define an improving response for $u$ in which $u$ removes all own edges, except those belonging to $T$, and additionally creates one edge of weight $x^*$ to $v$.
This new strategy incurs edge costs of at most $(p(x^*)+x^*)n m_u + p(x^*)$ for $u$.
Since by this strategy change $v$'s distance costs are not changed, $u$'s distance costs are at most $\delta_{G}(v) + (n-1)x^*$.
Hence, by having $S$ forming a SUM-NE, the original private costs of $u$ cannot be higher than claimed.

Using this bound for every node $u \ne v$ and the fact that $v$ only owns edges belonging to $T$ (removing a non-tree edge would reduce $v$'s cost), we have:
\begin{align*}
c(S) &\leq \delta_{G}(v) + (p(x^*) + x^*)n m_v\\
    &\phantom{=}    + \sum_{u \neq v}((p(x^*)+x^*)n(m_u + 1) + \delta_{G}(v) + x^*(n-1))\\
    &= n \delta_{G}(v) + x^*(n-1)^2 + (p(x^*)+x^*)n m_v +\sum_{u \neq v}{(p(x^*)+x^*)n(m_u + 1)} \\
    & = n\delta_{G}(v) + x^*(n-1)^2 + 2 (p(x^*)+x^*)n (n-1)
\end{align*}
For the last equality we use that a tree with $n$ nodes has $n-1$ edges.
\end{proof}

\begin{lemma}
\label{lemma:SumNEboundByDiameter}
Let $p:[\check\beta,\hat\beta]\rightarrow\mathbb{R}^+$ be a price function and $S$ a strategy profile forming a SUM-NE graph $G$.
Then, the diameter of $G$ is at most $\LDAUOmicron{p(x^*)+x^*}$, with $x^*\in[\check\beta,\hat\beta]$ minimizing $p(x) + x$.
\end{lemma}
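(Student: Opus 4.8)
The plan is to prove the \emph{stronger} statement that every pair of nodes lies at weighted distance at most $p(x^*)+x^*$, from which the lemma follows at once, since the diameter is by definition the largest such distance. The driving intuition is that if two nodes were far apart, the more distant one could profitably buy a \emph{single} shortcut edge of the cost-optimal weight $x^*$ straight to the other: a deviation whose edge cost is only $p(x^*)$ but whose distance gain grows with the separation. Making this improving deviation unavailable is exactly what forces the distance to be small.

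Concretely, I would fix a pair $u,v$ realizing the diameter $d \coloneqq d_{G}(u,v)$ and first argue that we may assume $u$ owns no direct edge to $v$. This rests on a short preliminary observation: in a SUM-NE no two nodes can simultaneously own edges to one another. Indeed, if $(v,x)\in s_u$ and $(u,x')\in s_v$ with, say, $x\geq x'$, then the edge $(\{u,v\},x)$ is redundant for all shortest paths (the parallel edge $(\{u,v\},x')$ is present in $G[S]$ regardless of ownership), so $u$ could delete it, strictly lowering its edge cost while leaving all distances unchanged, contradicting equilibrium. Hence at least one orientation of the diametral pair is owned by neither endpoint, and after relabelling $u$ owns no edge to $v$.

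Next I would analyze the deviation in which $u$ augments its strategy by the single new edge $(v,x^*)$. This is a legal strategy change whose only edge-cost effect is an increase of $p(x^*)$. If $d\leq x^*$ there is nothing to prove, so assume $d>x^*$; after the deviation $u$'s distance to $v$ drops from $d$ to $\min\{d,x^*\}=x^*$, while no other distance from $u$ can increase. Therefore $u$'s distance cost falls by at least $d-x^*$, counting only the single pair $u,v$ and discarding the (nonnegative) contributions of all other pairs. Since $S$ is a SUM-NE this deviation cannot be improving, so the edge-cost increase must dominate the distance-cost decrease, giving $p(x^*)\geq d-x^*$, i.e.\ $d\leq p(x^*)+x^*$ and hence $\diam(G)\leq p(x^*)+x^* = \LDAUOmicron{p(x^*)+x^*}$.

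The only point requiring genuine care, and the step I expect to be the main obstacle, is the bookkeeping around edge ownership: justifying that a bona fide \enquote{add one edge} deviation is always available at the diametral pair and that the non-owning orientation can always be selected. Once that preliminary claim is established the cost accounting is immediate, and the trivial sub-case $d\leq x^*$ is dispatched in one line.
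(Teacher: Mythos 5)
Your proof is correct, and it takes a genuinely different and more elementary route than the paper's. You charge the whole distance saving of the deviation to the single pair $u,v$: adding one edge $(v,x^*)$ raises $u$'s edge cost by exactly $p(x^*)$, drops $d_G(u,v)$ to $x^*$, and cannot increase any other distance from $u$, so the equilibrium condition forces $d\leq p(x^*)+x^*$ --- an explicit bound with no hidden constant. (Your ownership preliminaries are in fact unnecessary: the deviation $s_u'\coloneqq s_u\cup\{(v,x^*)\}$ is available whenever $(v,x^*)\notin s_u$, and if $(v,x^*)\in s_u$ then $d\leq x^*$ holds trivially; the mutual-ownership observation is true but not needed.) The paper instead first bounds every edge weight by $p(x^*)+x^*$, then lets an endpoint $u$ of a longest shortest path buy a shortcut of weight $x$ to the other endpoint and aggregates the savings over the $Z\geq\lceil (k-x)/\min\{p(x^*)+x^*,\hat\beta\}\rceil$ path-nodes in the far half, obtaining $Zx\leq p(x)$ and then choosing $x$ via a case distinction on whether $p(x^*)\leq x^*$, using the monotonicity of $p$. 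That multi-beneficiary accounting is the template which, in the unit-weight game, can be sharpened to sublinear diameter bounds, but as instantiated here it only recovers $\LDAUOmicron{p(x^*)+x^*}$; for this particular lemma your single-pair argument is shorter, gives a slightly sharper constant, and dispenses with both the edge-weight bound and the monotonicity assumption on $p$.
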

\begin{proof}
First, we show that no edge can have weight bigger than $p(x^*) + x^*$.
Assuming there is $(v,x)\in s_u$ such that the corresponding edge $\{u,v\}$ has weight $x > p(x^*)+x^*$, we consider the exchange of this edge with an edge of weight $x^*$.
Then, the new strategy $s'_u  \coloneqq  (s_u \setminus \{(v,x)\}) \cup \{(v,x^*)\}$ decreases the distance costs by at least $x - x^* > p(x^*)$ while increasing the edge costs by only $p(x^*)-p(x)$.
Since this contradicts $G$ being a SUM-NE graph, we conclude the claim.

Next, we consider the length of a longest shortest path in $G$.
Denote the end points of such a path by $u$ and $v$.
If this path consists of only one edge, this edge would have weight of at most $p(x^*)+x^*$, and the claim holds.
If the path consists of at least two edges, we define a parameter $k\in\mathbb{R}^+$ such that $2k = d_{G}(u,v)$ and consider the strategy change $s'_u  \coloneqq  s_u \cup \{(v,x)\}$ of node $u$ by creating the edge $\{u,v\}$ with $x$ being an arbitrary but fixed weight $x \in [\check\beta,\hat\beta]$.
This change decreases the distances from $u$ to all nodes on the path that have distance of at least $k+x$ to $u$.
Let $v \eqqcolon v_1, v_2, \ldots, v_{Z}$ denote these nodes, ordered by increasing distance to $v$.
Since each edge has weight of at most $\min\{p(x^*)+x^*, \hat\beta\}$, we get $Z \geq \left\lceil \frac{k-x}{\min\{p(x^*)+x^*, \hat\beta\}} \right\rceil$.
With the strategy change $s'_u$, each distance from $u$ to $v_i$ decreases from $2k - d_{G}(v,v_i)$ to be at most $x + d_{G}(v,v_i)$, resulting in a distance cost decrease of at least:
\begin{align*}
&\sum_{i=1}^Z(2k-d_{G}(v,v_i)) - \sum_{i=1}^Z(x+d_{G}(v,v_i))
    = Z(2k-x)-2\sum_{i=1}^{Z}d_{G}(v,v_i)\\
& \geq Z(2k-x)-2Z(k-x)
    = Z(2k-x - 2k + 2x) = Zx
\end{align*}
Since $G$ forms a SUM-NE, this cannot be an IR, hence $Zx\leq p(x)$.
We get $p(x) \geq \frac{k-x}{\min\{p(x^*)+x^*, \hat\beta\}}x$, giving $k \leq \min\{p(x^*)+x^*, \hat\beta\}\cdot p(x)/x+x$.

If $\min\{p(x^*)+x^*, \hat\beta\} = \hat\beta$, then the diameter of $G$ is at most
$2(p(\hat\beta) + \hat\beta) \leq 2(p(\hat\beta) + p(x^*)+x^*) = \LDAUOmicron{p(x^*)+x^*}$.
Otherwise, if $\min\{p(x^*)+x^*, \hat\beta\} = p(x^*)+x^*$, then the diameter is at most $(p(x^*)+x^*)p(x)/x + x$.
For $p(x^*) \leq x^*$ the lemma follows by setting $x \coloneqq x^*$.
In case $p(x^*) > x^*$, the diameter is at most $\LDAUOmicron{(p(x^*)+x^*)\cdot p(p(x^*))/p(x^*)}$ by setting $x \coloneqq p(x^*)$.
Using the monotonicity of $p$, it holds $p(p(x^*)) \leq p(x^*)$ and we get $\LDAUOmicron{p(x^*)+x^*}$.
\end{proof}

\begin{theorem}
\label{thm:sumPoaUpperBound}
Let $p:[\check\beta,\hat\beta]\rightarrow\mathbb{R}^+$ be a price function and $x^*\in[\check\beta,\hat\beta]$ the value minimizing $p(x) + x$, then in the SUM-game
$\PoA=\LDAUOmicron{\min\{n, (p(x^*) + x^*)/\check\beta\}}$.
\end{theorem}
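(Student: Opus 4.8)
The plan is to read off the price of anarchy as the ratio of a worst-case upper bound on the social cost of any SUM-NE to a lower bound on the optimal social cost, and to produce the two arguments of the minimum from two \emph{different} lower bounds on the optimum. First I would fix an arbitrary SUM-NE $S$ with graph $G \coloneqq G[S]$ and bound its social cost. Picking any reference node $v$, its distance cost satisfies $\delta_G(v) = \sum_{u\in V} d_G(v,u) \leq (n-1)\diam(G)$, and by Lemma~\ref{lemma:SumNEboundByDiameter} we have $\diam(G) = \LDAUOmicron{p(x^*)+x^*}$, so $\delta_G(v) = \LDAUOmicron{(n-1)(p(x^*)+x^*)}$. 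Substituting this into the bound $c(S) \leq n\delta_G(v) + x^*(n-1)^2 + 2(p(x^*)+x^*)n(n-1)$ of Lemma~\ref{le:neCost}, and using $x^* \leq p(x^*)+x^*$ together with $(n-1)^2 \leq n(n-1)$ to absorb the middle term, I obtain $c(S) = \LDAUOmicron{n(n-1)(p(x^*)+x^*)}$.

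Next I would establish two lower bounds on the optimal social cost $\mathrm{OPT}$ and divide. For the first, every edge weight is at least $\check\beta$, so each of the $n(n-1)$ ordered pairs of distinct nodes is at distance at least $\check\beta$; the distance term alone gives $\mathrm{OPT} \geq \check\beta\,n(n-1)$, and dividing yields $\PoA = \LDAUOmicron{(p(x^*)+x^*)/\check\beta}$. For the second, I would invoke Lemma~\ref{lemma:sumOptimalSolutions}: the optimum is either a star of cost $(n-1)(2(n-1)\bar\chi + p(\bar\chi))$ or a clique of cost $n(n-1)(\chi^* + p(\chi^*)/2)$. Since $x^*$ minimizes $p(x)+x$, we have $\bar\chi + p(\bar\chi) \geq x^* + p(x^*)$ and $\chi^* + p(\chi^*) \geq x^* + p(x^*)$, so the star cost is at least $(n-1)(x^*+p(x^*))$ and the clique cost is at least $\tfrac12 n(n-1)(x^*+p(x^*))$. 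Hence $\mathrm{OPT} = \LDAUOmega{(n-1)(p(x^*)+x^*)}$, and dividing the social-cost bound by this gives $\PoA = \LDAUOmicron{n}$. Taking the smaller of the two resulting ratios proves the claimed bound $\PoA = \LDAUOmicron{\min\{n,(p(x^*)+x^*)/\check\beta\}}$.

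The arithmetic of chaining the earlier lemmas is routine; the step demanding the most care is the second lower bound on $\mathrm{OPT}$. There one must correctly relate the \emph{three} distinct minimizers appearing in the setup, namely $x^*$ (minimizing $p(x)+x$), $\bar\chi$ (minimizing $p(x)+2(n-1)x$), and $\chi^*$ (minimizing $p(x)+2x$), and verify that \emph{both} candidate optima from Lemma~\ref{lemma:sumOptimalSolutions} are $\LDAUOmega{(n-1)(p(x^*)+x^*)}$ by using only the defining optimization criteria of these minimizers. Keeping this comparison clean is what makes the $\LDAUOmicron{n}$ factor emerge, so that the two lower bounds on $\mathrm{OPT}$ exactly yield the two terms of the minimum.
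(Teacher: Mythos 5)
Your proof is correct, and the upper-bound half is exactly the paper's: chain Lemma~\ref{lemma:SumNEboundByDiameter} into Lemma~\ref{le:neCost} to get $c(S)=\LDAUOmicron{n(n-1)(p(x^*)+x^*)}$ for every SUM-NE $S$. Where you diverge is the lower bound on the optimum. The paper applies Lemma~\ref{lemma:sumSocialCostBound} to the optimal graph, obtaining the parametrized bound $2\check\beta n(n-1)+m(p(x^*)+x^*-4\check\beta)$, and then does a case analysis on the sign of $p(x^*)+x^*-4\check\beta$ to decide whether the bound is minimized at $m=n(n-1)/2$ or at $m=n-1$; a further case distinction on whether $n<(p(x^*)+x^*)/\check\beta$ extracts the two terms of the minimum from the single resulting ratio. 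You instead produce two independent lower bounds --- the purely metric bound $\mathrm{OPT}\geq\check\beta\,n(n-1)$, and the bound $\mathrm{OPT}=\LDAUOmega{(n-1)(p(x^*)+x^*)}$ read off from the explicit star/clique optima of Lemma~\ref{lemma:sumOptimalSolutions} via the minimality of $x^*$ --- and take the better of the two ratios. The two decompositions are morally the same (the paper's two terms $2\check\beta n(n-1)$ and $m(p(x^*)+x^*-4\check\beta)$ play the roles of your two bounds), but yours avoids the sign case analysis and the final case distinction on $n$, at the price of invoking Lemma~\ref{lemma:sumOptimalSolutions}, which the paper's proof of this theorem does not need; the paper's route stays self-contained in Lemma~\ref{lemma:sumSocialCostBound} and would survive even without the exact characterization of optima. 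Your comparisons $\bar\chi+p(\bar\chi)\geq x^*+p(x^*)$ and $\chi^*+p(\chi^*)\geq x^*+p(x^*)$ are exactly the right way to relate the three minimizers, so the argument goes through.
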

\begin{proof}
Let $G$ be an arbitrary SUM-NE graph, then by Lemma~\ref{lemma:SumNEboundByDiameter} the diameter is at most $\LDAUOmicron{p(x^*)  +x^*}$.
Hence, $n \delta_{G}(v) = \LDAUOmicron{n(n-1)(p(x^*)+x^*)}$ and we get by Lemma~\ref{le:neCost} that for every SUM-NE strategy profile the social cost is at most
$\LDAUOmicron{n(n-1)(p(x^*)+x^*)}$.
Further, with $\check\beta$ being the edge weight lower bound, Lemma~\ref{lemma:sumSocialCostBound} gives
$2\check\beta n(n-1) + m(p(x^*) + x^* - 4\check \beta)$ as social cost lower bound, whereas $m$ denotes the number of edges.

If $p(x^*)+x^* \leq 4\check\beta$, then the lower bound is minimized with $m=n(n-1)/2$ and becomes $n(n-1)(p(x^*) + x^*)/2$, which gives a PoA of $\LDAUOmicron{1}$.
Otherwise, if $p(x^*) + x^* > 4\check\beta$, the lower bound is minimized with $m=n-1$ and we get
$\PoA = \LDAUOmicron{\frac{n(n-1)(p(x^*)+x^*)}{(n-1)(2\check\beta n + p(x^*) + x^*-4\check\beta)}}$.
When separately considering whether $n < \frac{p(x^*) + x^*}{\check\beta}$ holds or not, we get $\PoA=\LDAUOmicron{\min\{n, (p(x^*) + x^*)/\check\beta\}}$.
\end{proof}

Applying the price and weight value ranges, we can deduce a price of anarchy upper bound that is independent of the price function.

\begin{corollary}
For every price function $p:[\check\beta,\hat\beta]\rightarrow\mathbb{R}^+$, in the SUM-game it holds
$\PoA = \LDAUOmicron{\min\{1 + p(\check\beta)/\check\beta, (p(\hat\beta) + \hat\beta)/\check\beta, n\}}$.
\end{corollary}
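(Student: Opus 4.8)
The plan is to obtain this statement as a direct consequence of Theorem~\ref{thm:sumPoaUpperBound}, which already establishes $\PoA = \LDAUOmicron{\min\{n, (p(x^*)+x^*)/\check\beta\}}$, where $x^*$ minimizes $p(x)+x$ over $[\check\beta,\hat\beta]$. The only work left is to replace the quantity $p(x^*)+x^*$, which still depends on the precise shape of $p$, by a bound that refers solely to the endpoints $\check\beta$ and $\hat\beta$ of the weight interval. The $n$ term of the bound carries over unchanged.

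First I would exploit the defining optimality of $x^*$: since $x^*$ minimizes $p(x)+x$, we have $p(x^*)+x^* \leq p(x)+x$ for every $x\in[\check\beta,\hat\beta]$, and in particular for the two endpoints. Evaluating at $x=\check\beta$ yields $p(x^*)+x^* \leq p(\check\beta)+\check\beta$, and evaluating at $x=\hat\beta$ yields $p(x^*)+x^* \leq p(\hat\beta)+\hat\beta$. Dividing both inequalities by $\check\beta$ gives $(p(x^*)+x^*)/\check\beta \leq 1 + p(\check\beta)/\check\beta$ and $(p(x^*)+x^*)/\check\beta \leq (p(\hat\beta)+\hat\beta)/\check\beta$, so that $(p(x^*)+x^*)/\check\beta$ is at most the minimum of these two endpoint expressions.

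Substituting this bound back into Theorem~\ref{thm:sumPoaUpperBound} and folding the resulting nested minima into a single one then yields $\PoA = \LDAUOmicron{\min\{1 + p(\check\beta)/\check\beta, (p(\hat\beta)+\hat\beta)/\check\beta, n\}}$, as claimed. There is no real obstacle here beyond the bookkeeping of the minimization: the entire argument is just the observation that the optimal trade-off value $p(x^*)+x^*$ can never exceed what either endpoint of the interval achieves, after which the desired form is immediate.
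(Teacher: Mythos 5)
Your argument is correct and is exactly the intended one: the paper states this corollary without proof, remarking only that it follows by ``applying the price and weight value ranges,'' which is precisely your observation that the minimality of $x^*$ gives $p(x^*)+x^* \leq p(\check\beta)+\check\beta$ and $p(x^*)+x^* \leq p(\hat\beta)+\hat\beta$, after which dividing by $\check\beta$ and combining with Theorem~\ref{thm:sumPoaUpperBound} yields the claim. No gaps.
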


The price of anarchy upper bound is even tight for a broad class of price functions.
In particular, for all price functions $x\mapsto p(x)$ that decrease faster than the linear function $x\mapsto -x$ and where both $p(\hat\beta) \leq \check\beta$ and $p(\check\beta)\leq \hat\beta$ hold, it is
$\PoA=\Omega(\min\{1+p(\check\beta)/\check\beta, (p(\hat\beta) + \hat\beta)/\check\beta, n\})$.
Yet, the bound cannot be tight for every function as can be seen when considering $p:[1,1]\rightarrow [\alpha,\alpha]$, which constitutes the original game by \textcite{fabrikant2003}, for which it is known that for most ranges of $\alpha$ the price of anarchy is constant (cf.\ related work).

\begin{theorem}
\label{thm:sumNeUpperBound}
Let $p:[\check\beta,\hat\beta]\rightarrow\mathbb{R}^+$ be a price function with $p(\hat\beta) \leq \check \beta$, $p(\check\beta)\leq\hat\beta$, and $\hat\beta =\argmin_{x\in[\check\beta,\hat\beta]} p(x)+x$,
then $\PoA=\LDAUOmega{\min\{n, (p(x^*) + x^*)/\check\beta\}}$.
\end{theorem}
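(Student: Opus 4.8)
The plan is to establish tightness by exhibiting, for each $n$, a single SUM-NE whose social cost exceeds the optimum by a factor $\LDAUOmega{\min\{n,(p(x^*)+x^*)/\check\beta\}}$. First I would record that under the hypotheses the target simplifies: since $x^*=\hat\beta$ and $p(\hat\beta)\le\check\beta\le\hat\beta$, we have $\hat\beta \le p(x^*)+x^* \le 2\hat\beta$, so $(p(x^*)+x^*)/\check\beta = \LDAUTheta{\hat\beta/\check\beta}$ and it suffices to prove $\PoA = \LDAUOmega{\min\{n,\hat\beta/\check\beta\}}$. The bad equilibrium I propose is the complete graph $K_n$ in which every edge has weight $\hat\beta=x^*$, with edge ownership distributed so that no node owns all of its incident edges (a balanced orientation gives every node out-degree at most $\lceil(n-1)/2\rceil\le n-2$ for $n\ge 3$). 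As the optimum baseline I would use the star with all edges of weight $\check\beta$, whose social cost by the computation in Lemma~\ref{lemma:sumOptimalSolutions} is at most $2(n-1)^2\check\beta+(n-1)p(\check\beta)\le 2(n-1)^2\check\beta+(n-1)\hat\beta$, the last step using $p(\check\beta)\le\hat\beta$.

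The main work is verifying that this weighted clique is a SUM-NE, and I would check the three deviation types for an arbitrary node $v$. Reducing the weight of an owned edge from $\hat\beta$ to some $x$ changes $v$'s cost by $(p(x)+x)-(p(\hat\beta)+\hat\beta)\ge 0$, because $\hat\beta=x^*$ minimizes $p(x)+x$. Adding an edge of weight $x$ can only shorten the distance to its other endpoint, since every two-hop detour of length $x+\hat\beta$ is longer than the existing direct distance $\hat\beta$; the resulting change is at least $p(x)+x-\hat\beta\ge(p(x^*)+x^*)-\hat\beta=p(\hat\beta)\ge 0$. The delicate case is edge removal, and here the balanced ownership is essential: it guarantees that $v$ retains at least one incident edge owned by its other endpoint, so that the untouched clique on $V\setminus\{v\}$ keeps every node within distance $2\hat\beta$ of $v$; removing an owned edge then costs $\hat\beta-p(\hat\beta)\ge 0$ per edge and is never improving. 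Because the clique on $V\setminus\{v\}$ is untouched by any strategy change of $v$, these atomic changes decouple and an arbitrary deviation of $v$ has cost at least the sum of the individual non-negative contributions, so no improving response exists.

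With the equilibrium established, the two cost bounds finish the proof. The clique has social cost $n(n-1)(\hat\beta+p(\hat\beta)/2)=\LDAUTheta{n^2\hat\beta}$, while the star baseline bounds the optimum. If $n\le\hat\beta/\check\beta$ then $(n-1)^2\check\beta\le(n-1)\hat\beta$, so the optimum is $\LDAUOmicron{n\hat\beta}$ and the ratio is $\LDAUOmega{n}$; if $n>\hat\beta/\check\beta$ then $(n-1)\hat\beta\le(n-1)n\check\beta$, so the optimum is $\LDAUTheta{n^2\check\beta}$ and the ratio is $\LDAUOmega{\hat\beta/\check\beta}$. In both regimes $\PoA=\LDAUOmega{\min\{n,\hat\beta/\check\beta\}}=\LDAUOmega{\min\{n,(p(x^*)+x^*)/\check\beta\}}$, as claimed.

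I expect the single genuine obstacle to be the edge-removal step of the equilibrium verification. A naive clique in which one node owns all its incident edges is in fact \emph{not} an equilibrium: the ``drop everything and re-buy one cheap-weight edge'' deviation analyzed in Theorem~\ref{thm:sumEqExistence} can strictly help once $p(\check\beta)<\hat\beta$. Thus the balanced-ownership device, together with the argument that it caps every post-deviation distance at $2\hat\beta$, is precisely what makes the construction go through, and this is the part I would write out most carefully.
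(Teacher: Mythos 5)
Your proof is correct and follows the paper's route exactly: the all-$\hat\beta$ clique as the expensive equilibrium, the all-$\check\beta$ star as the benchmark, and the same two-regime case distinction at the end. The one substantive point to correct is the claim in your closing paragraph that balanced edge ownership is \emph{essential} because a node owning all its incident edges could profitably ``drop everything and re-buy one cheap edge'' once $p(\check\beta)<\hat\beta$. Under the theorem's hypotheses this deviation is never improving, and the paper accordingly proves the clique is a SUM-NE for \emph{arbitrary} ownership. The reason is the hypothesis $\hat\beta=\argmin_{x}\,(p(x)+x)$, which you use elsewhere but not here: it forces $p(x)+x\geq p(\hat\beta)+\hat\beta$ for every $x$, so a re-bought edge of weight $x$ is necessarily expensive. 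Concretely, for a node owning all $n-1$ edges, dropping them and buying one edge of weight $x$ changes its cost by
$p(x)+x+(n-2)(x+\hat\beta)-(n-1)\bigl(p(\hat\beta)+\hat\beta\bigr)
=\bigl[(p(x)+x)-(p(\hat\beta)+\hat\beta)\bigr]+(n-2)\bigl(x-p(\hat\beta)\bigr)\geq (n-2)\bigl(\check\beta-p(\hat\beta)\bigr)\geq 0$,
using the argmin hypothesis for the first bracket and $x\geq\check\beta\geq p(\hat\beta)$ for the second. So your balanced-orientation device is harmless (it is a valid special case) but buys you nothing; the paper's single inequality, bounding the gain of an arbitrary deviation by $(m-m')(p(\hat\beta)-\check x)\leq 0$ with $\check x$ the cheapest weight the deviator retains, handles all ownership assignments at once and is the rigorous version of the ``decoupling'' you describe informally.
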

\begin{proof}
First, we show that a complete graph with every edge having weight of $\hat\beta$ forms a SUM-NE.
Let $S$ be a strategy profile such that $G[S]$ forms a complete graph where every edge has a weight of $\hat\beta$.
For an arbitrary node $u\in G[S]$ with strategy $s_u\in S$, we consider an improving response strategy $s'_u$.
Denote by $m \coloneqq |s_u|$ the number of edges of $u$ with strategy $s_u$ and by $m' \coloneqq |s'_u|$ the corresponding number of edges with the changed strategy.
Further, define $E_x \coloneqq \{\{u,v\} | (v,x) \in s'_u\}$ to be the edges of weight $x$ owned by $u$ in $s'_u$ and denote the lowest weight of any edge in $s'_u$ by $\check x$.

For every improving response it holds $m' \leq m$, since for any edge weight $x$ the costs for creating an edge exceed the possible gain, which is
$\hat\beta - (p(x)+x) < \hat\beta + p(\hat\beta) - (p(x) + x) \leq 0$.
Since $G[S]$ is a complete graph, with $s'_u$ the diameter is at most two hops and hence, the distance of $u$ to every not directly connected node is $\hat\beta + \check x$.
This gives as upper bound for the gain by changing to $s'_u$:
$$
    m(p(\hat\beta) + \hat\beta) - \left[\sum_{x\in[\check\beta,\hat\beta]}|E_x|(p(x) + x) + (\hat\beta + \check x)(m-m')\right]
    \leq (m-m')(p(\hat\beta) + \hat\beta -\hat\beta - \check x) \leq 0
$$
Hence, the stated graph forms a SUM-NE and by comparing its cost to the social cost of a star, with all edges having weight $\check\beta$, we get as PoA lower bound:
\begin{align*}
\PoA & \geq \frac{p(\hat\beta)n(n-1) + \hat\beta n (n-1)}{p(\check\beta)(n-1) + 2\check\beta(n-1)(n-2) + 2(n-1)\check\beta} \\
    & = \frac{n(p(\hat\beta) + \hat\beta)}{p(\check\beta) + 2\check\beta(n-1)}
        = \LDAUOmega{\min\{n, (p(\hat\beta) + \hat\beta)/\check\beta\}}
\end{align*}
\end{proof}

Concluding the analysis, we apply our results to explicit price functions.

\begin{corollary}
\label{cor:oneByWresults}
For the price function $p:[\check\beta,\hat\beta]\rightarrow\mathbb{R}^+, x \mapsto \alpha/x$ with $\alpha > 0$ the bounds $\PoS=\LDAUOmicron{1}$ and $\PoA=\LDAUOmicron{\sqrt{\alpha}/\check\beta}$ hold.
\end{corollary}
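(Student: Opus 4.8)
The plan is to specialize the two general results proved above to the explicit function $p(x)=\alpha/x$: the price of stability follows from Corollary~\ref{cor:sumPoS} and the price of anarchy from Theorem~\ref{thm:sumPoaUpperBound}, so the whole argument reduces to identifying the distinguished weights $x^*$ and $\bar x$ and then checking which branch of those statements applies.

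First I would compute the minimizers. Both $p(x)+x=\alpha/x+x$ and $p(x)+(n-1)x=\alpha/x+(n-1)x$ are convex on $(0,\infty)$ with unconstrained minima at $\sqrt\alpha$ and $\sqrt{\alpha/(n-1)}$, so the constrained minimizers over $[\check\beta,\hat\beta]$ are the projections of these two points into the interval. In the principal case $\sqrt\alpha\in[\check\beta,\hat\beta]$ this gives $x^*=\sqrt\alpha$ and hence $p(x^*)+x^*=\alpha/\sqrt\alpha+\sqrt\alpha=2\sqrt\alpha$. Substituting this into Theorem~\ref{thm:sumPoaUpperBound} yields $\PoA=\LDAUOmicron{\min\{n,2\sqrt\alpha/\check\beta\}}=\LDAUOmicron{\sqrt\alpha/\check\beta}$ at once. (The interesting regime is exactly $\sqrt\alpha\in[\check\beta,\hat\beta]$, where the optimal price-weight trade-off weight is actually available; if $\sqrt\alpha<\check\beta$ the minimizer is clipped to $\check\beta$, $p(x^*)+x^*<2\check\beta$, and the price of anarchy becomes constant.)

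For the price of stability, the key observation is that with $p(x)=\alpha/x$ the two alternative hypotheses of Corollary~\ref{cor:sumPoS} collapse to a single product comparison: $p(x^*)>\bar x\iff\alpha>x^*\bar x$ and $x^*>p(\bar x)\iff\alpha<x^*\bar x$. Consequently, unless $x^*\bar x=\alpha$ holds exactly, one of the two hypotheses is satisfied and the corollary delivers $\PoS=\LDAUOmicron{1}$ immediately. This is the crucial point: the fallback estimate $\LDAUOmicron{1+x^*/\bar x}$ would otherwise be as large as $\LDAUOmicron{\sqrt n}$, since in the interior regime $x^*/\bar x=\sqrt{n-1}$, so landing in the constant branch is precisely what keeps the price of stability constant here.

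The main obstacle is therefore the degenerate equality $x^*\bar x=\alpha$ combined with the interval clipping. Here I would use the monotone ordering $\bar x\le x^*$ (inherited from $\sqrt{\alpha/(n-1)}\le\sqrt\alpha$ for $n\ge2$) together with the three possible clip positions: in the interior the equality forces the degenerate case $n=2$, and once both minimizers are clipped to a common endpoint $\check\beta$ or $\hat\beta$ the constraint $x^*\bar x=\alpha$ pins them to the same value, so that $x^*=\bar x$ in every tie. In that situation the fallback bound itself reads $\LDAUOmicron{1+x^*/\bar x}=\LDAUOmicron{1}$, which closes the argument and establishes both claimed bounds.
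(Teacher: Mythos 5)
The paper states this corollary without proof, and your route --- specializing Theorem~\ref{thm:sumPoaUpperBound} and Corollary~\ref{cor:sumPoS} by computing the projections of $\sqrt{\alpha}$ and $\sqrt{\alpha/(n-1)}$ onto $[\check\beta,\hat\beta]$ --- is surely the intended one. Your treatment of the price of stability is correct and commendably careful: the complementarity $p(x^*)>\bar x\iff \alpha>x^*\bar x$ versus $x^*>p(\bar x)\iff x^*\bar x>\alpha$ is exactly the right observation, and in the tie $x^*\bar x=\alpha$ the projection structure does force $x^*=\bar x$ in every configuration (your enumeration omits the mixed case where $x^*=\sqrt{\alpha}$ is interior while $\bar x$ is clipped up to $\check\beta$, but there the tie forces $\check\beta=\sqrt{\alpha}$ and hence again $\bar x=x^*$, so the conclusion stands).

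The one genuine gap is in the price-of-anarchy half: you treat $\sqrt{\alpha}\in[\check\beta,\hat\beta]$ and $\sqrt{\alpha}<\check\beta$, but silently omit $\sqrt{\alpha}>\hat\beta$. There the minimizer clips to $x^*=\hat\beta$ and $p(x^*)+x^*=\alpha/\hat\beta+\hat\beta$, which can be $\LDAUomega{\sqrt{\alpha}}$; for instance with $\check\beta=\hat\beta=1$ and $\alpha=n$, Theorem~\ref{thm:sumPoaUpperBound} yields only $\LDAUOmicron{\min\{n,\alpha+1\}}=\LDAUOmicron{n}$, not the claimed $\LDAUOmicron{\sqrt{\alpha}/\check\beta}=\LDAUOmicron{\sqrt{n}}$, and the $\min$ with $n$ does not rescue the bound. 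So in that regime the corollary does not follow from the tools proved in the paper (this is arguably a defect of the corollary as stated for arbitrary intervals, since it degenerates to the classical fixed-weight game where sharper bounds require different techniques). You should either add the hypothesis $\sqrt{\alpha}\le\hat\beta$ explicitly or acknowledge that the upper-clipping case is outside the reach of Theorem~\ref{thm:sumPoaUpperBound}; everything else in your argument is sound.
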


\begin{corollary}
\label{cor:linearFctResults}
For the price function $p:[1,\alpha - 2\varepsilon]\rightarrow\mathbb{R}^+, x \mapsto \alpha - (1 + \varepsilon) x$ with $\alpha > 0, \varepsilon\in (0, \nicefrac{1}{n-1})$ the bounds $\PoS=\LDAUOmicron{1}$ and $\PoA=\LDAUTheta{\alpha - \varepsilon}$ hold.
The optimal solution is given by a star with all edges having weight of $1$.
\end{corollary}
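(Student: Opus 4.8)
The plan is to instantiate the general SUM-game bounds for this particular $p$, so that the proof is almost entirely a matter of identifying the extremal edge weights and checking the hypotheses of the earlier results. The first step is to compute $x^*=\argmin_{x}(p(x)+x)$: since $p(x)+x=\alpha-\varepsilon x$ is strictly decreasing on $[1,\alpha-2\varepsilon]$, we get $x^*=\hat\beta=\alpha-2\varepsilon$. I would then record that $p$ is a legitimate (positive) price function only while $p(\hat\beta)=\varepsilon(2-\alpha+2\varepsilon)>0$, i.e.\ $\alpha<2+2\varepsilon$, so that $\alpha$ is in fact a bounded constant; this observation is what keeps $\alpha-\varepsilon$ below $n$ later on. Finally I would verify the hypotheses of Theorem~\ref{thm:sumNeUpperBound}: $\hat\beta=\argmin(p(x)+x)$ is already shown, while $p(\hat\beta)\le\check\beta=1$ and $p(\check\beta)=\alpha-1-\varepsilon\le\hat\beta$ both reduce to $\varepsilon\le 1$.

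For the price of anarchy I would feed $x^*=\hat\beta$ into Theorem~\ref{thm:sumPoaUpperBound} and Theorem~\ref{thm:sumNeUpperBound}, which share the same governing quantity $(p(x^*)+x^*)/\check\beta$. A direct substitution gives $(p(\hat\beta)+\hat\beta)/\check\beta=\alpha(1-\varepsilon)+2\varepsilon^2$, and I would rewrite this as $(\alpha-\varepsilon)+\varepsilon(1-\hat\beta)$, which lies between $(1-\varepsilon)(\alpha-\varepsilon)$ and $\alpha-\varepsilon$ because $\hat\beta\ge 1$; hence it is $\LDAUTheta{\alpha-\varepsilon}$. Using $\alpha<2+2\varepsilon$ together with $\varepsilon<\nicefrac{1}{n-1}$ I would argue $\alpha-\varepsilon<n$, so the minimum in both bounds is attained by the price-function term and not by $n$. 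Theorem~\ref{thm:sumPoaUpperBound} then yields $\PoA=\LDAUOmicron{\alpha-\varepsilon}$ and Theorem~\ref{thm:sumNeUpperBound} yields $\PoA=\LDAUOmega{\alpha-\varepsilon}$, which combine to $\PoA=\LDAUTheta{\alpha-\varepsilon}$.

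For the price of stability and the optimal structure I would first compute $\bar x=\argmin_x(p(x)+(n-1)x)$: since $p(x)+(n-1)x=\alpha+(n-2-\varepsilon)x$ is increasing for $n\ge 3$, we get $\bar x=\check\beta=1$ and $p(\bar x)=\alpha-1-\varepsilon$. Then $x^*-p(\bar x)=(\alpha-2\varepsilon)-(\alpha-1-\varepsilon)=1-\varepsilon>0$, so the hypothesis $x^*>p(\bar x)$ of Corollary~\ref{cor:sumPoS} is met and $\PoS=\LDAUOmicron{1}$ follows at once. To pin down the optimum I would invoke Lemma~\ref{lemma:sumOptimalSolutions}: the two candidate weights $\bar\chi=\argmin(p(x)+2(n-1)x)$ and $\chi^*=\argmin(p(x)+2x)=\argmin(\alpha+(1-\varepsilon)x)$ both equal $1$ since these combinations are increasing in $x$, so the optimum is a weight-one star or a weight-one clique, and I would settle the precise shape by comparing their social costs $(n-1)(2(n-1)+p(1))$ and $n(n-1)(1+p(1)/2)$.

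The main obstacle is this last comparison together with the careful handling of the admissible range of $\alpha$. Both candidate optima are $\LDAUTheta{n^2}$, which is all the asymptotic anarchy statement actually needs, but deciding which one is genuinely optimal is delicate: it turns on whether $p(1)$ exceeds a fixed threshold, and this is exactly where the constraint $\varepsilon<\nicefrac{1}{n-1}$ and the induced bound on $\alpha$ enter. I expect the fiddliest part of the argument to be the bookkeeping that guarantees the $\min$ in the anarchy bounds resolves to $\alpha-\varepsilon$ rather than to $n$, and that $p$ stays a positive price function throughout; everything else is a direct appeal to Theorems~\ref{thm:sumPoaUpperBound} and~\ref{thm:sumNeUpperBound} and Corollary~\ref{cor:sumPoS}.
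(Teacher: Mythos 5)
You take exactly the route the paper intends for this corollary (it is stated without proof as a direct instantiation of Theorem~\ref{thm:sumPoaUpperBound}, Theorem~\ref{thm:sumNeUpperBound}, Corollary~\ref{cor:sumPoS} and Lemma~\ref{lemma:sumOptimalSolutions}), and the bulk of your computations is correct: $x^*=\hat\beta=\alpha-2\varepsilon$ since $p(x)+x=\alpha-\varepsilon x$ is decreasing, $\bar x=\bar\chi=\chi^*=\check\beta=1$ since the corresponding combinations are increasing for $n\ge 3$, the hypotheses of Theorem~\ref{thm:sumNeUpperBound} do hold, $(p(\hat\beta)+\hat\beta)/\check\beta=\alpha(1-\varepsilon)+2\varepsilon^2=\Theta(\alpha-\varepsilon)$, and $x^*-p(\bar x)=1-\varepsilon>0$ gives the constant price of stability via Corollary~\ref{cor:sumPoS}.

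There is, however, a genuine gap in the last part, and your own (correct) admissibility observation makes it worse rather than better. You defer the star-versus-clique comparison, but Lemma~\ref{lemma:sumOptimalSolutions} makes it explicit: with both candidate weights equal to $1$, the difference of social costs is
\[
(n-1)\bigl(2(n-1)+p(1)\bigr)-n(n-1)\bigl(1+p(1)/2\bigr)=(n-1)(n-2)\bigl(1-p(1)/2\bigr),
\]
so the star is optimal only if $p(1)=\alpha-1-\varepsilon\ge 2$, i.e.\ $\alpha\ge 3+\varepsilon$. This is incompatible with the constraint $\alpha<2+2\varepsilon$ that you derive from $p(\hat\beta)>0$: under that constraint $p(1)<1+\varepsilon<2$ and the \emph{clique} is strictly cheaper, so the statement ``the optimal solution is a star with all edges of weight $1$'' cannot be concluded (and $\Theta(\alpha-\varepsilon)$ collapses to $\Theta(1)$, making the anarchy claim vacuous). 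If instead one reads the corollary with large $\alpha$, as the star claim and the nontrivial $\Theta(\alpha-\varepsilon)$ bound require, then $p$ is no longer a valid $\mathbb{R}^+$-valued price function near $\hat\beta$, and your argument that the $\min\{n,\cdot\}$ in Theorem~\ref{thm:sumNeUpperBound} resolves to $\alpha-\varepsilon$ (which rests on $\alpha<2+2\varepsilon$) disappears; one would additionally need $\alpha=\mathcal{O}(n)$ for the lower bound to match. You should either carry the comparison out and flag the contradiction, or state explicitly which regime of $\alpha$ you are proving the corollary in; as written the proposal establishes the PoS bound and the PoA bound only in a regime where the claimed optimal structure is false.
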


\section{The Max-Game}
In this section, we consider the quality of equilibria in the MAX-game.
\begin{lemma}
\label{lemma:sumSocialCostBoundMax}
Let $p:[\check\beta,\hat\beta]\rightarrow\mathbb{R}^+$ be a price function, $x^*\in[\check\beta,\hat\beta]$ the value minimizing $x + p(x)/2$, and $S$ a strategy profile.
Then, $c(S)\geq (x^* + p(x^*)/2)n$.
\end{lemma}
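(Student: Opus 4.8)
The plan is to reduce the social cost to a per-node sum and lower bound each node's contribution by $x^*+p(x^*)/2$. First I would dispose of the degenerate cases: if $G[S]$ is disconnected then some node has infinite eccentricity and $c(S)=\infty$, so the bound holds trivially; hence I may assume $G[S]$ is connected (and $n\ge 2$, the case $n=1$ being degenerate). For the main argument I would write $c(S)$ as the sum of a distance part $\sum_{v\in V}\mathrm{ecc}(v)$, where $\mathrm{ecc}(v):=\max_{u\in V}d_{G[S]}(v,u)$, and an edge part that is at least $\sum_{e\in E}p(w(e))$, with $E$ the edge set of $G[S]$ and $w(e)$ the weight of $e$; this is a valid lower bound since an edge created by both endpoints is paid for twice.

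The key quantity is, for each node $v$, the minimum weight $y_v\in[\check\beta,\hat\beta]$ of an edge incident to $v$, which is well defined because the graph is connected with $n\ge 2$. Two observations then drive the proof. First, $\mathrm{ecc}(v)\ge y_v$: every path leaving $v$ starts with an edge incident to $v$, of weight at least $y_v$, and the remaining edges have positive weight, so $d_{G[S]}(v,u)\ge y_v$ for every $u\ne v$; taking the maximum over $u$ gives the claim. Second, I would use the identity $\sum_{e\in E}p(w(e))=\frac{1}{2}\sum_{v\in V}\sum_{e\ni v}p(w(e))$ (each edge is counted once per endpoint) together with $\sum_{e\ni v}p(w(e))\ge p(y_v)$ (keep only the lightest incident edge, all prices being positive) to conclude that the edge part is at least $\frac{1}{2}\sum_{v}p(y_v)$.

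Combining the two observations yields
\[
c(S)\ \ge\ \sum_{v\in V}\bigl(y_v+p(y_v)/2\bigr)\ \ge\ \sum_{v\in V}\bigl(x^*+p(x^*)/2\bigr)\ =\ n\bigl(x^*+p(x^*)/2\bigr),
\]
where the last inequality uses that $x^*$ minimizes $x+p(x)/2$ over $[\check\beta,\hat\beta]$ and that each $y_v$ lies in that interval.

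I expect the main obstacle to be the factor $1/2$ in the edge part: one must argue that a single edge may simultaneously be the lightest incident edge of \emph{both} its endpoints, so charging each node half the price of its lightest incident edge never overcounts the true edge cost. The identity $\sum_{e}p(w(e))=\frac{1}{2}\sum_v\sum_{e\ni v}p(w(e))$ is exactly what makes this precise. The eccentricity bound is the other point requiring a careful statement, but it follows cleanly from the positivity of all edge weights.
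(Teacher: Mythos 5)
Your proof is correct and follows essentially the same charging argument as the paper: each node is charged the weight of one incident edge (via its eccentricity) plus half that edge's price, and the resulting per-node sum is bounded below using the minimality of $x^*$. The only cosmetic difference is that you charge each node its lightest incident edge and invoke the handshake identity, while the paper charges the first edge of a longest shortest path and observes that each edge serves in this role for at most two nodes; both yield the same factor-$\nicefrac{1}{2}$ accounting.
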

\begin{proof}
For every node $v_i\in V, i=1,\ldots,n$, we consider an arbitrary longest shortest path to any other node and denote the weight of the first edge of this path by $x_i$.
By this, there are at most two weights in the set $\{x_1,\ldots,x_n\}$ that correspond to the same edge and further, the price for each of those edges must be payed by one node.
Summing over all private costs, we get that $c(S) = \sum_{v\in V}c_v(S) \geq \sum_{v\in V} (x_v + p(x_v)/2) \geq (x^* + p(x^*)/2)n$.
\end{proof}

For a star with all edges having weight of $x^*$ and thus social cost $\LDAUTheta{x^*n+p(x^*)n}$, this bound is tight.

\begin{theorem}
\label{thm:maxGameNeGraphs}
For any price function $p:[\check\beta,\hat\beta]\rightarrow\mathbb{R}^+$, equilibrium graphs exist and the price of stability is constant.
\end{theorem}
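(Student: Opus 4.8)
The plan is to prove both claims at once by exhibiting, for every price function, a single explicit equilibrium whose social cost is within a constant factor of the lower bound $(x^*+p(x^*)/2)n$ of Lemma~\ref{lemma:sumSocialCostBoundMax}. Since the remark after that lemma shows this bound is realized up to constants by a star of weight $x^*$, producing an equilibrium of social cost $\LDAUOmicron{(x^*+p(x^*))n}$ gives existence and a constant price of stability simultaneously. Throughout write $\tilde x:=\argmin_x\bigl(p(x)+x\bigr)$ for the weight with the best price-plus-length trade-off of a single hop.

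I would use two candidate equilibria and choose between them, mirroring the case analysis of Theorem~\ref{thm:sumEqExistence}. The first candidate is the \emph{star} with all $n-1$ edges of weight $\tilde x$, each satellite owning its edge; its satellites have cost $p(\tilde x)+2\tilde x$, its center has cost $\tilde x$, so its social cost is $\LDAUTheta{(p(\tilde x)+\tilde x)n}$. The second is a \emph{complete graph} of a suitable uniform weight $y\ge\tilde x$ with a balanced assignment of edge ownership, so that no node owns more than $\lceil(n-1)/2\rceil$ edges. The governing dichotomy is whether a node can profit from the MAX-specific move of connecting directly to everyone: the star is the equilibrium when $\min_z\bigl((n-1)p(z)+z\bigr)\ge p(\tilde x)+2\tilde x$, and the complete graph is used otherwise.

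For the verification the center is easy: from the star it can only add edges, and any bought edge $(\cdot,x)$ satisfies $p(x)+x\ge p(\tilde x)+\tilde x$, so either it leaves some satellite at distance $\tilde x$ (hence eccentricity $\ge\tilde x$) or the edge attaining its eccentricity already pays at least $\tilde x$; in no case does its cost fall below $\tilde x$. For a deviating satellite I would split on whether its new strategy leaves some other satellite reachable only through the center. If so, letting $w$ be the smallest weight it buys, its edge cost plus eccentricity is at least $p(w)+w+\tilde x\ge p(\tilde x)+2\tilde x$, so it does not improve. The remaining case—buying edges to \emph{all} other satellites to collapse the eccentricity—is the crux and is exactly what the dichotomy controls: the total edge cost plus eccentricity can beat $p(\tilde x)+2\tilde x$ only when $\min_z\bigl((n-1)p(z)+z\bigr)<p(\tilde x)+2\tilde x$, and a short calculation (using $p(z)+z\ge p(\tilde x)+\tilde x$) shows any such minimizing weight $z$ forces $p(z)<\tilde x/(n-2)$. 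That smallness of the price is precisely what makes the complete graph (with weight $\max\{z,\tilde x\}$) a stable equilibrium: no node profits from dropping its $\le\lceil(n-1)/2\rceil$ owned edges and re-incurring eccentricity $2y$, because $\lceil(n-1)/2\rceil\,p(y)\le y$; and, choosing $y\ge\tilde x$, no node profits from raising or lowering an edge weight since $p(x)+x$ is minimized at $\tilde x$.

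For the price of stability I would bound the chosen equilibrium against $(x^*+p(x^*)/2)n$. The key elementary inequality is $p(\tilde x)+2\tilde x\le 4\,(x^*+p(x^*)/2)$: writing $a=\tilde x-x^*$ and $b=p(x^*)-p(\tilde x)$, the optimality of $\tilde x$ for $x+p(x)$ and of $x^*$ for $x+p(x)/2$ give $b/2\le a\le b$, whence $p(\tilde x)+2\tilde x=2x^*+p(x^*)+(2a-b)\le 2(x^*+p(x^*))\le 4(x^*+p(x^*)/2)$; this settles the star. For the complete graph the same bound gives $y=\LDAUOmicron{x^*+p(x^*)}$, while the balance condition keeps the $\binom{n}{2}p(y)$ term of order $ny$, so its social cost is again $\LDAUOmicron{(x^*+p(x^*))n}$. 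I expect the main obstacle to be the satellite analysis in the cheap-edge regime: because one direct edge can collapse a node's eccentricity—unlike the SUM-game—a node may profitably rewire to many cheap edges, and pinning down the $n$-dependent threshold between the two candidates (through the number of edges a node must buy to reach everyone) together with verifying stability of the dense graph under the correct weight and ownership is the delicate part.
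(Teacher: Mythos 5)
Your proposal follows essentially the same route as the paper's proof: the same candidate equilibria (a satellite-owned star at the weight minimizing $p(x)+x$, and a uniform-weight clique at the weight minimizing $(n-1)p(x)+x$), the same dichotomy on whether $\min_z\bigl((n-1)p(z)+z\bigr)$ beats the satellite cost $p(\tilde x)+2\tilde x$, and the same comparison against the lower bound of Lemma~\ref{lemma:sumSocialCostBoundMax}. The only structural difference is that the paper subdivides the clique regime by whether $\chi^*\le(n-2)p(\chi^*)$ and uses a center-owned star as a third equilibrium there; your observation that the clique regime already forces $p(z)<\tilde x/(n-2)$ shows that this extra case is vacuous, so your two-case analysis covers the same ground.
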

\begin{proof}
Define $\chi^*\in [\check\beta,\hat\beta]$ to be the value minimizing $(n-1)p(x)+x$, and $\bar\chi\in[\check\beta,\hat\beta]$ the value minimizing $p(x)+x$.
Using these definitions, we get $\bar\chi \leq \chi^*$, $p(\bar\chi) \geq p(\chi^*)$, and $p(\bar\chi) + \bar\chi < (n-1)p(\chi^*) + \chi^*$.
\smallskip
\par
\noindent
\textit{(Stability of star with satellites owning all edges.)}
Consider $(n-1)p(\chi^*)+\chi^* \geq p(\bar\chi) + 2 \bar\chi$ and a star with all edges having weight of $\bar\chi$ and are owned by the satellites.
Then, the center node can only improve its private costs by creating $n-1$ edges of a weight $x < \bar\chi$, leading to a gain of $\bar\chi - ((n-1)p(x)+x) \leq \bar\chi + p(\bar\chi) - (p(x)+x) \leq 0$.
For any satellite $v$, there are two kinds of improving responses.
First, if $v$ only creates edges to satellites, the gain is at most
$p(\bar\chi) + 2 \bar\chi - ((n-2)p(x) + \bar\chi + x) \leq 0$.
Secondly, for a strategy change that connects the acting satellite to every other node, the gain is at most $p(\bar\chi)+2 \bar\chi - ((n-1)p(\chi^*)+\chi^*) \leq 0$.
Hence, the star graph forms a MAX-NE and by using the social cost lower bound we get

\begin{align*}
\PoS \leq \frac{(n-1) p(\bar\chi) + n 2\bar\chi}{(x^*+p(x^*)/2)n}
    \leq 2 \frac{(n-1)p(\bar\chi) + n2\bar\chi}{(\bar\chi + p(\bar\chi))n} \leq  4
\end{align*}

\smallskip
\par
\noindent
\textit{(Stability of star with center owning all edges.)}
On the other hand, if we have $(n-1)p(\chi^*)+\chi^* < p(\bar\chi) + 2 \bar\chi$, consider a star graph consisting only of edges of weight of $\chi^*$ owned by the center node.
If it further holds that $\chi^* \leq (n-2)p(\chi^*)$, we claim that this graph forms a MAX-NE.
By construction, $c$ cannot perform any improving response.
For any satellite $v$, we have to consider three kinds of best responses:
First, if $v$ creates edges of weight $x$ ($x\leq 2\chi^*$) to all other $n-2$ satellites, the gain is
$2\chi^* - (\max\{\chi^*, x\} + (n-2)p(x))$.
If $x < \chi^*$ this value is negative.
But also for $x \geq \chi^*$ we get a gain of at most $\chi^* + (n-2)p(\chi^*) - (x + (n-2)p(x)) \leq 0$, since the value minimizing $x + (n-2)p(x)$ lies in the interval $[\bar\chi,\chi^*]$ and thus the best possible choice for $x$ is $x=\chi^*$.
Secondly, if $v$ creates edges of weight $x$ to other nodes, the gain is
$2\chi^* - ((n-1)p(x) + x) \leq (n-2)p(\chi^*) + \chi^* - (n-1)p(\chi^*) - \chi^* \leq 0$.
Thirdly, if $v$ creates only one edge to $c$ of weight $x$, the gain is:
$2\chi^* - (p(x) + x + \chi^*)
    \leq \chi^* - p(\bar\chi) - \bar\chi
    \leq \chi^* - ((n-1)p(\chi^*) + \chi^* - \bar\chi)
    \leq (n-2)p(\chi^*) - (n-1)p(\chi^*) - \chi^* + \bar\chi < 0
$.
Hence, this star graph forms a MAX-NE and by using the social cost lower bound we get
\begin{align*}
\PoS \leq \frac{(n-1) p(\chi^*) + n 2 \chi^*}{(x^*+p(x^*)/2)n}
    \leq 2 \frac{(n-1)p(\chi^*) + n2 \chi^*}{(\bar\chi + p(\bar\chi))n}
    \leq 4 \frac{(n-1)p(\chi^*) + 2 \chi^*}{\bar\chi + p(\bar\chi)}
    \leq 8
\end{align*}

\smallskip
\par
\noindent
\textit{(Stability of clique with one node owning $n-1$ edges.)}
As the remaining case, we now consider $(n-1)p(\chi^*)+\chi^* < p(\bar\chi) + 2 \bar\chi$ and  $\chi^* > (n-2)p(\chi^*)$.
Here, we construct a star with one node $c$ owning $n-1$ edges of weight $\chi^*$ and complete this star to a clique with all edges having weight $\chi^*$, but arbitrary edge ownerships.
We claim that this graph forms a MAX-NE and at first note that by construction, $c$ has optimal weight for all of its edges.
Also every other node has optimal weight for its edges, since by unilaterally changing its edge weights the diameter stays at least $\chi^*$.
We further show that no node will change its edge set be considering the following kinds of best responses.
First, for any node exchanging its edges by edges to all other nodes, the optimal weight is $\chi^*$ and hence, doing so cannot improve its private costs.
Secondly, by simply removing all own edges the gain is at most $(n-2)p(\chi^*) - \chi^* < 0$.
Thirdly, by removing all own edges and creating one edge to $c$ of weight $x$, the gain is at most
$(n-1)p(\chi^*) + \chi^* - (\chi^* + x) - p(x)
    \leq (n-1)p(\chi^*) - x - p(x)
    \leq \chi^* + p(\chi^*) - (\bar\chi + p(\bar\chi))
    \leq 0
$.
Hence,
\begin{align*}
\PoS \leq \frac{p(\chi^*)(n-1)n/2 + n \chi^*}{(x^*+p(x^*)/2)n}
    \leq \frac{(n-1)p(\chi^*) + \chi^*}{(x^*+p(x^*))/2}
    \leq \frac{p(\bar\chi) + 2 \bar\chi}{(x^*+p(x^*))/2} \leq  4
\end{align*}
\end{proof}

\begin{lemma}
\label{lemma:MaxGameDiameterToPoA}
Let $p:[\check\beta,\hat\beta]\rightarrow\mathbb{R}^+$ be a price function with $x^*\in[\check\beta,\hat\beta]$ minimizing $x + p(x)/2$ and $S$ a MAX-NE strategy profile.
Then, $c(S) \leq n \delta_{G[S]}(v) + x^*(n-1) + 2(p(x^*)+x^*)(n-1)$,
with $\delta_{G[S]}(v)$ the distance costs of an arbitrary $v \in V$.
\end{lemma}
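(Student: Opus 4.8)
The plan is to mirror the proof of Lemma~\ref{le:neCost} for the SUM-game, exploiting that in the MAX-game a node's distance cost is a single farthest distance $\delta_{G[S]}(v) = \max_{u\in V} d_{G[S]}(v,u)$ rather than a sum. This structural difference shrinks each additive term by a factor of roughly $n$ and produces exactly the claimed bound.

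First I would establish an edge-price bound: in any MAX-NE every edge has price at most $p(x^*)+x^*$. Suppose some edge of weight $x$ had price $p(x) > p(x^*)+x^*$. Since $p$ is monotonically decreasing and $p(x) > p(x^*)$, we have $x < x^*$, so replacing this edge by one of weight $x^*$ lowers the owner's edge cost by $p(x)-p(x^*) > x^*$ while increasing every pairwise distance, and hence the owner's farthest distance, by at most $x^*-x < x^*$. The net change is strictly negative, an improving response contradicting equilibrium. This is precisely where the MAX-game saves the factor $n$ present in Lemma~\ref{le:neCost}: there a single edge change affected up to $n-1$ of the summed shortest paths, whereas here it shifts the maximum distance by only the length increment.

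Next I would fix an arbitrary node $v$ and a shortest-path tree $T$ rooted at $v$, and for each $u$ let $m_u$ be the number of tree edges owned by $u$, so that $\sum_u m_u = n-1$. For $u\neq v$ I would prove $c_u(S) \leq (p(x^*)+x^*)(m_u+1) + \delta_{G[S]}(v) + x^*$ via the improving-response argument: let $u$ keep its tree edges, drop all other owned edges, and buy one weight-$x^*$ edge to $v$. Its new edge cost is at most $(p(x^*)+x^*)m_u + p(x^*) \leq (p(x^*)+x^*)(m_u+1)$ by the edge-price bound, and since all tree edges survive this unilateral change, the path $u\to v\to w$ gives $d(u,w) \leq x^* + d_{G[S]}(v,w)$ for every $w$, so $u$'s farthest distance is at most $\delta_{G[S]}(v)+x^*$. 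Equilibrium yields the per-node bound. For $v$ itself I would note that $v$ owns only tree edges (removing a non-tree edge would strictly lower its cost), hence $c_v(S)\leq (p(x^*)+x^*)m_v + \delta_{G[S]}(v)$.

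Finally, summing $c(S)=c_v(S)+\sum_{u\neq v}c_u(S)$ and collecting terms gives $n\,\delta_{G[S]}(v) + (n-1)x^* + (p(x^*)+x^*)\bigl(\sum_u m_u + (n-1)\bigr)$; substituting $\sum_u m_u = n-1$ produces exactly $n\,\delta_{G[S]}(v) + x^*(n-1) + 2(p(x^*)+x^*)(n-1)$. The main obstacle is pinning down the edge-price bound correctly, in particular justifying that changing one edge's weight shifts the eccentricity by at most the weight change. This rests on shortest paths being simple, so each edge is traversed at most once; it is the single place where the MAX-structure is exploited, and an incorrect constant here would propagate directly into the final bound.
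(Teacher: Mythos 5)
Your proposal is correct and follows essentially the same route as the paper's proof: the edge-price bound $p(x)\leq p(x^*)+x^*$ via an exchange argument, the shortest-path tree rooted at $v$ with the per-node bound $c_u(S)\leq (p(x^*)+x^*)(m_u+1)+\delta_{G[S]}(v)+x^*$ obtained by dropping non-tree edges and buying a weight-$x^*$ edge to $v$, and the final summation using $\sum_u m_u = n-1$. You even correctly include the $+x^*$ term in the per-node bound, which the paper's displayed inequality omits but its subsequent summation uses.
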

\begin{proof}
First, we show that in an equilibrium graph all edges have a price of at most $p(x^*)+x^*$.
Suppose to the contrary that there is an edge having price of $p(x) > p(x^*)+x^*$.
Then exchanging this edge with one of weight $x^*$ would decrease the edge costs by $p(x)-p(x^*) > x^*$ while increasing the distance costs by at most $x^*$.
Since this would contradict $S$ being a MAX-NE, no edge has weight of more than $p(x^*)+x^*$.

Next, take an arbitrary node $v$ and consider a shortest path tree $T$ of $v$ in $G$.
For every node $u\in V$ we define $m_u  \coloneqq  |\{\{u,v\} | (v,x) \in s_u \wedge \{u,v\} \in T\}|$ to be the number of tree edges that belong to $T$ and are maintained by node $u$.
Then, for the cost of any node $u \neq v$ it holds:
\[
    c_u(S) \leq (p(x^*)+x^*)(m_u + 1) + \delta_{G}(v)
\]
To see this, we define a strategy change for $u$ yielding exactly this value.
Since $S$ forms a MAX-NE, the costs of the changed strategy are an upper bound of the private costs of $u$.
We consider the strategy that is obtained when $u$ removes all own edges, except those belonging to $T$, and additionally creates one edge of weight $x^*$ to $v$.
This strategy incurs edge costs of at most $(p(x^*)+x^*)m_u + p(x^*)$ for node $u$.
By having an edge of weight $x^*$ to $v$, the distance costs are bound by at most $\delta_{G}(v)+x^*$, since the strategy change does not change the distances of $v$ to any other node.

Using the private costs bound for every node $u \ne v$ and the fact that $v$ only owns edges belonging to $T$ (otherwise removing a non-tree edge would improve $v$'s cost), we have:
\begin{align*}
c(S) &\leq \delta_{G}(v) + (p(x^*)+x^*) m_v + \sum_{u \neq v}[(p(x^*)+x^*)(m_u + 1) + \delta_{G}(v) + x^*]\\
    &= n \delta_{G}(v) + x^*(n-1) + (p(x^*)+x^*) m_v +\sum_{u \neq v}{(p(x^*)+x^*)(m_u + 1)} \\
    & = n\delta_{G}(v) + x^*(n-1) + 2 (p(x^*)+x^*)(n-1).
\end{align*}
For the last equality we use that the number of edges in a tree of $n$ nodes is $n-1$.
\end{proof}

Using a similar approach like \textcite{demaine2007}, we derive a bound for the diameter and hence for the social cost of every MAX-NE graph.

\begin{lemma}
\label{lemma:MaxNeDiameterBound}
Let $p:[\check\beta,\hat\beta]\rightarrow\mathbb{R}^+$ be a price function and $S$ a strategy profile forming a MAX-NE.
Then, the diameter of $G[S]$ is at most $\mathcal{O}(\sqrt[3]{p(x)^2xn} + x)$, for $x \in [\check\beta,\hat\beta]$ arbitrary.
\end{lemma}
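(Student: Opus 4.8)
The plan is to bound the diameter $D$ of $G \coloneqq G[S]$ by proving the equivalent vertex lower bound $n = \LDAUOmega{D^3/(p(x)^2 x)}$ and then solving for $D$. Fix the arbitrary weight $x\in[\check\beta,\hat\beta]$; every deviation considered buys fresh edges of exactly this weight, each at price $p(x)$. I would start by selecting a node $u$ with $\text{ecc}_G(u)=D$ (one endpoint of a diameter-realizing pair). The single structural fact driving everything is the MAX-NE deviation inequality: since adding a single edge $\{u,w\}$ of weight $x$ must not decrease $c_u(S)$, the eccentricity of $u$ can drop by at most $p(x)$ when such an edge is added; more generally, simultaneously buying $q$ edges of weight $x$ to a target set $W$ of size $q$ lowers $u$'s eccentricity by at most $q\,p(x)$. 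Written out, this says that for every $q$-element set $W$ there is a witness node $z$ with
\[
    d_G(u,z)\ge D-q\,p(x)\quad\text{and}\quad \min_{w\in W} d_G(w,z)\ge D-q\,p(x)-x .
\]

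Next I would convert this witness property into a packing of the far region. Let $A_q \coloneqq \{z : d_G(u,z)\ge D-q\,p(x)\}$ and $\rho \coloneqq D-q\,p(x)-x$. The displayed inequality says precisely that no $q$ balls of radius $\rho$ (centred anywhere) can cover $A_q$. Taking a maximal set $Z\subseteq A_q$ of points pairwise at distance $>\rho$, maximality forces the balls $B(z,\rho)$, $z\in Z$, to cover $A_q$; were $|Z|\le q$ this would contradict the witness property, so $|Z|\ge q+1$. Choosing $q=\LDAUTheta{D/p(x)}$ (so that $D-q\,p(x)\ge D/2$ and $\rho\ge D/2-x$) yields $\LDAUOmega{D/p(x)}$ anchors, all at distance $\ge D/2$ from $u$ and pairwise more than $D/2-x$ apart; in particular the balls of radius $\tfrac14(D/2-x)$ around them are disjoint.

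The remaining and decisive step is the counting that upgrades this spread into the full cube-root bound, following the weighted analogue of the MAX-game argument of \textcite{demaine2007}. Each anchor inherits eccentricity $\LDAUOmega{D}$ from its distance to $u$, so the same packing argument can be run a second time \emph{inside} each anchor's private ball, producing $\LDAUOmega{D/p(x)}$ further well-separated points there; because the anchor balls are disjoint these points are all distinct. Charging, along the shortest paths emanating from these points, one new node per length increment of order $x$ (the granularity at which bought edges of weight $x$ can create shortcuts) contributes a further factor $\LDAUOmega{D/x}$. Multiplying the $\LDAUOmega{(D/p(x))^2}$ disjoint neighbourhoods by the $\LDAUOmega{D/x}$ nodes each must contain gives $n=\LDAUOmega{D^3/(p(x)^2 x)}$; equivalently, the width $|\{z : d_G(u,z)\in[r,r+x)\}|$ grows like $(r/p(x))^2$, and integrating over $r\in[0,D]$ produces the same total. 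Solving $n=\LDAUOmega{D^3/(p(x)^2 x)}$ for $D$ then gives $D=\LDAUOmicron{\sqrt[3]{p(x)^2 x\, n}}$, while the additive $\LDAUOmicron{x}$ term absorbs both the thresholds $r-x>p(x)$ needed to make shortcuts strictly profitable and the short-diameter regime in which $G$ is spanned by a handful of edges.

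I expect the main obstacle to be exactly this last counting: making the two packing scales interlock so that it is the \emph{bought}-edge weight $x$, rather than the graph's maximum edge weight (bounded by $p(x^*)+x^*$ via Lemma~\ref{lemma:MaxGameDiameterToPoA}), that appears in the denominator, and doing so without double-counting nodes across the anchor neighbourhoods. Establishing the quadratic growth of the layer widths is the heart of the lemma; the packing setup above is the routine part, and everything else is bookkeeping of the additive $x$- and $p(x)$-corrections at the boundary ranges of $r$.
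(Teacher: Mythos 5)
Your setup is sound and your second packing step is, in essence, the paper's own argument: the paper picks a maximal $2kx$-separated set of cluster centers $C$ and uses the deviation in which the eccentricity-maximizing node buys a weight-$x$ edge to every center to force $|C|\,p(x)\geq 2kx$, which is exactly your $|Z|\geq q+1$ with $q=\Theta(D/p(x))$. But the step you yourself flag as ``the heart of the lemma'' is genuinely missing, and the mechanism you propose for it does not work. Running the packing argument a second time from an anchor $z$ produces witnesses at distance roughly $\mathrm{ecc}(z)/2=\Omega(D)$ \emph{from} $z$ --- far outside $z$'s private ball of radius about $D/8$ --- so these points are not confined to disjoint regions and your multiplication of counts double-counts across anchors. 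Likewise, the claimed quadratic growth of the layer widths $|\{z: d(u,z)\in[r,r+x)\}|\sim (r/p(x))^2$ is asserted, not derived; the deviation inequality you have available only yields a \emph{linear} layer width.

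What the paper actually proves, and what you need, is a neighborhood-growth lemma: for any node $c$ and any $k\leq \delta(c)/(2x)$, the ball $N_k(c)$ of radius $kx$ satisfies $|N_k(c)|\geq \frac{x}{2p(x)}(k^2-3k+2)$. The deviation behind it is different from a packing: take a shortest-path tree rooted at $c$ and, for each level $i$, let $Q_i$ be the first node beyond distance $ix$ on each root-to-leaf path; buying a weight-$x$ edge from $c$ to \emph{every} node of $Q_i$ shortcuts every node at distance more than $(i+1)x$ and hence lowers $c$'s eccentricity by at least $(i-1)x$, so the MAX-NE condition forces $p(x)|Q_i|\geq (i-1)x$, i.e.\ a linear layer width $|Q_i|\geq (i-1)x/p(x)$. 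Summing over $i=1,\dots,k-1$ gives the quadratic ball growth. Applying this to each of the $|C|\geq 2kx/p(x)$ pairwise-disjoint cluster balls yields $n\geq \frac{2kx}{p(x)}\cdot\frac{x}{2p(x)}(k^2-3k+2)$ with $k=\frac{\diam(G[S])-x}{4x}$, hence $k=\mathcal{O}\bigl((p(x)^2n/x^2)^{1/3}\bigr)$ and the stated diameter bound. So the correct factorization is $(D/p(x))$ clusters times $(D^2/(xp(x)))$ nodes per cluster, not your $(D/p(x))^2\times(D/x)$; without the $Q_i$-layer deviation your proof does not close.
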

\begin{proof}
Given an arbitrary node $u\in V$ and an edge weight $x \in [\check\beta,\hat\beta]$, we define $N_{k}(u)$ to be the nodes within distance of at most $kx \leq \delta_{G[S]}(u) / 2$ to $u$.
We claim that then for $k\leq \delta_{G[S]}(u)/(2x)$ it holds $|N_k(u)|\geq \frac{x}{2 p(x)} (k^2 - 3k + 2)$.
For this, consider a shortest-path tree $T$ rooted at $u$ and define $t$ as its number of leaves.
Let $P_1, P_2, \ldots, P_t$ denote the shortest paths from $u$ to all leaves.
Moreover, for every $k$ define $Q_k$ to be the set containing the first node of every path $P_i$ that has distance of more than $kx$ and at most $(k+1)x$ to $u$.
Now consider a strategy change where $u$ creates an edge of weight $x$ to each node in $Q_k$.
If $k\leq \frac{\delta_{G[S]}(u)}{2x}$, then this decreases the distance to all nodes in $Q_k$ by at least $(k-1)x$.
The increase of $u$'s edge costs is $p(x)|Q_k|$.
Since $S$ forms a NE, it must hold $p(x)|Q_k| \geq (k-1)x$ and thus $|Q_k| \geq (k-1)x / p(x)$.
By $|N_{k}(u)| \geq \sum_{i=1}^{k-1} |Q_i| \geq \sum_{i=1}^{k-1} (i-1)x / p(x) = \frac{x}{2 p(x)} (k^2-3k+2)$, this gives the lower bound.

Next, consider a node $v\in V$ with maximal distance costs, i.e., $\delta_{G[S]}(v) = \diam(G[S])$.
We select a set of cluster centers $C$ such that two conditions hold: first, for any node $u \in V$ the minimal distance to a center is $d_{G[S]}(u,C) \leq 2kx$; secondly, the distance between any two centers $c,c' \in C, c \neq c'$ is $d_{G[S]}(c,c') > 2kx$.
We set $k  \coloneqq  \frac{\diam(G[S])-x}{4x}$ and observe that $k \leq \frac{\delta_{G[S]}(c)}{2x}$ for all centers $c$, since otherwise there is a center $c'$ having distance of not more than $2kx$ to all nodes, implying a smaller diameter than $\diam(G[S])$.
By construction of $C$, $n \geq \sum_{c \in C}|N_{k}(c)| \geq \frac{|C| x}{2 p(x)}(k^2-3k+2)$.
Considering a strategy change where $v$ buys edges of weight $x$ to all $c \in C$, this would decrease $v$'s distance costs by at least $\delta_{G[S]}(v) - (2k + 1)x \geq 2kx$ while increasing the edge costs by $|C|p(x)$.
Since $S$ forms a MAX-NE, we have $|C| p(x) \geq 2kx$ and obtain:
\begin{align*}
n \geq |C| \cdot \frac{x}{p(x)} \cdot \frac{k^2-3k+2}{2} \geq \frac{2kx}{p(x)} \cdot \frac{x}{p(x)} \cdot \frac{k^2-3k+2}{2}
\end{align*}
This gives $0 \geq k^3 -3 k^2 + 2k - \frac{p(x)^2n}{x^2}$ and hence $k = \LDAUOmicron{\left(p(x)^2n / x^2\right)^\frac{1}{3}}$.
\end{proof}

\begin{theorem}
Let $p:[\check\beta,\hat\beta]\rightarrow\mathbb{R}^+$ be a price function.
Then, the price of anarchy in the MAX-game is bounded by $\LDAUOmicron{1 + \sqrt[3]{n}}$.
\end{theorem}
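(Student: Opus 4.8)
The goal is to bound the price of anarchy in the MAX-game by $\mathcal{O}(1 + \sqrt[3]{n})$, combining the social-cost lower bound (Lemma \ref{lemma:sumSocialCostBoundMax}), the upper bound on MAX-NE social cost in terms of a single node's distance cost (Lemma \ref{lemma:MaxGameDiameterToPoA}), and the diameter bound (Lemma \ref{lemma:MaxNeDiameterBound}).

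Let me think through the plan.

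The plan is to combine the three preceding lemmas. For an arbitrary MAX-NE strategy profile $S$ with graph $G := G[S]$, Lemma \ref{lemma:MaxGameDiameterToPoA} bounds the social cost by $n\delta_G(v) + x^*(n-1) + 2(p(x^*)+x^*)(n-1)$ for any fixed $v$. Since $\delta_G(v) \le \diam(G)$ in the MAX-game, I want to substitute a good diameter estimate. The key subtlety is that the diameter bound of Lemma \ref{lemma:MaxNeDiameterBound}, namely $\mathcal{O}(\sqrt[3]{p(x)^2 x n} + x)$, is parametrized by an arbitrary $x \in [\check\beta,\hat\beta]$, and to get a clean PoA bound I should specialize $x := x^*$. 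Doing so makes the diameter $\mathcal{O}(\sqrt[3]{p(x^*)^2 x^* n} + x^*)$, so that $n\delta_G(v) = \mathcal{O}\bigl(n(\sqrt[3]{p(x^*)^2 x^* n} + x^*)\bigr)$, and this term dominates the social-cost upper bound.

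Next I would compare against the social-cost lower bound from Lemma \ref{lemma:sumSocialCostBoundMax}, which gives $c(S^{\text{OPT}}) \ge (x^* + p(x^*)/2)n = \Omega((x^* + p(x^*))n)$. Forming the ratio, the $x^*(n-1)$ and $2(p(x^*)+x^*)(n-1)$ additive terms in the numerator each contribute only $\mathcal{O}(1)$ after dividing by the $\Omega((x^*+p(x^*))n)$ denominator, so the whole PoA reduces to controlling the dominant ratio
\[
\frac{n\cdot\mathcal{O}(\sqrt[3]{p(x^*)^2 x^* n} + x^*)}{\Omega((x^* + p(x^*))n)} = \mathcal{O}\!\left(\frac{\sqrt[3]{p(x^*)^2 x^* n}}{x^* + p(x^*)} + \frac{x^*}{x^* + p(x^*)}\right).
\]
The second summand is trivially $\mathcal{O}(1)$.

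The main obstacle—and the real content of the calculation—is showing the first summand is $\mathcal{O}(1 + \sqrt[3]{n})$, i.e.\ that $\sqrt[3]{p(x^*)^2 x^*} = \mathcal{O}(x^* + p(x^*))$. The plan is to verify this by the AM–GM inequality applied to the three factors $p(x^*), p(x^*), x^*$: we have $\sqrt[3]{p(x^*)\cdot p(x^*)\cdot x^*} \le \tfrac{1}{3}(p(x^*) + p(x^*) + x^*) = \tfrac{1}{3}(2p(x^*)+x^*) \le x^* + p(x^*)$. Hence $\sqrt[3]{p(x^*)^2 x^* n}/(x^*+p(x^*)) \le \sqrt[3]{n}$, and the full PoA is $\mathcal{O}(1 + \sqrt[3]{n})$ as claimed. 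I should double-check that specializing Lemma \ref{lemma:MaxNeDiameterBound} to $x = x^*$ is legitimate (it is, since the lemma holds for arbitrary $x\in[\check\beta,\hat\beta]$ and $x^*$ lies in this interval) and that the definition of $x^*$ is consistent across the two lemmas (both use $x^*$ minimizing $x + p(x)/2$), so no reconciliation of differing optimizers is needed.
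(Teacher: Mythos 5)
Your proposal is correct and follows essentially the same route as the paper: combine Lemma~\ref{lemma:MaxGameDiameterToPoA} with the diameter bound of Lemma~\ref{lemma:MaxNeDiameterBound} specialized to $x:=x^*$, and divide by the lower bound of Lemma~\ref{lemma:sumSocialCostBoundMax}. The only (cosmetic) difference is that you dispose of the term $\sqrt[3]{p(x^*)^2x^*n}/(x^*+p(x^*))$ via AM--GM, whereas the paper does a case distinction on whether $x^*/p(x^*)<1$; both yield the same $\mathcal{O}(1+\sqrt[3]{n})$ bound.
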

\begin{proof}
Let $x^*\in [\check\beta,\hat\beta]$ be the value minimizing $p(x) + x/2$.
Then for every MAX-NE strategy profile $S$, we get with Lemma~\ref{lemma:MaxGameDiameterToPoA} and Lemma~\ref{lemma:MaxNeDiameterBound} (using $x \coloneqq x^*$) the social cost upper bound
$c(S)\leq n(p(x^*)+2x^*+\sqrt[3]{p(x^*)^2x^*n})$.
Comparing this to the social cost lower bound from Lemma~\ref{lemma:sumSocialCostBoundMax}, we get
$\PoA = \mathcal{O}((p(x^*)+2x^*+\sqrt[3]{p(x^*)^2x^*n})/(x^* + p(x^*)))$, which can be simplified to
$\PoA = \mathcal{O}\left(1 + \sqrt[3]{n}\cdot\min\left\{\sqrt[3]{x^*/p(x^*)}, \sqrt[3]{(p(x^*)/x^*)^2}\right\}\right)$ and gives the claim by case distinction on whether $x^*/p(x^*) < 1$ holds or not.
\end{proof}

\section{Conclusion}
Our model extension captures the effects of quality of service agreements in autonomous distributed networks and provides theoretical results regarding the stable states of these networks.
Interestingly, despite of the considerably increased freedom in the strategic decisions of the nodes (e.g., for a continuous price function the strategy set is unbounded), equilibria always exist.
In the SUM-game for a given price function $p$, we discovered the value that minimizes the term $p(x) + x$ to characterize the worst case loss by selfish behavior.
This value can be understood as the optimal trade-off for using one edge for exactly one shortest path and effectively bounds the maximal investment into any edge.

\begin{sloppy}
\printbibliography
\end{sloppy}

\end{document}